\newtheorem{remark}{Remark}
\newtheorem{theorem}{Theorem}
\newtheorem{lemma}{Lemma}
\newenvironment{proof}{\par{\em{Proof: }}}{\hfill$\square$\\}
\newcommand{\matr}[1]{\begin{pmatrix} #1 \end{pmatrix}}
\begin{document}
\begin{frontmatter}

\title{Angle-based formation stabilization and maneuvers in port-Hamiltonian form with bearing and velocity measurements}

\author[First] {Ningbo Li,} 
\author[Second] { Pablo Borja,}
\author[First] { Arjan van der Schaft,}
\author[First] { Jacquelien M. A. Scherpen} 
\address[First] {Jan C. Willems Center for Systems and Control, University of Groningen, The Netherlands (e-mail: ningbo.li@rug.nl, a.j.van.der.schaft@rug.nl, and j.m.a.scherpen@rug.nl).}
\address[Second]{School of Engineering, Computing and Mathematics, University of Plymouth, Plymouth, UK. (e-mail: pablo.borjarosales@plymouth.ac.uk)}

\begin{abstract}                
 This paper proposes a port-Hamiltonian framework for angle-based formation stabilization and maneuvers using bearing and velocity measurements with an underlying triangulated Laman graph. The corresponding port-Hamiltonian controller is designed using virtual couplings on the errors of angle constraints in angle space and then the angle constraints and agent actuators are mapped by the constraint Jacobian, which can be applied to other formation constraints. In addition, due to the fact that the port-Hamiltonian model allows for complex and heterogeneous agent dynamics, our framework can be extended to networks with different agent dynamics and formation constraints. To avoid unavailable distance terms in the control law, an estimator is designed based on port-Hamiltonian theory and the property that energy is coordinate-free for different sensor modalities using bearing and velocity measurements, which permits our framework to inject damping for the formation maneuvers. Furthermore, several maneuvers are analyzed under both considerations of stabilization and transient performance. Simulations are performed to illustrate the effectiveness of the approach. 
\end{abstract}

\begin{keyword}
Angle-based formation, passivity, port-Hamiltonian, distance estimator
\end{keyword}
\end{frontmatter}
\section{Introduction}

Over the last two decades, formation control has attracted extensive interest due to its applications in many domains. See the survey papers \cite{beard2001coordination}, \cite{ren2007information}, \cite{oh2015survey}, \cite{cortes2017coordinated}, \cite{liu2018survey}, and \cite{chen2019control} and the references therein. According to the variables that define the geometric shapes, the formation control can be categorized as displacement-, distance-, bearing-, and angle-based formations. While satisfying the different geometric constraints, the formation can achieve various maneuvers, e.g., distances are invariant to translations and rotations, and bearings are invariant to translations and scales. Compared with the above two kinds of constraints, angles have the advantage of being invariant to all three maneuvers, i.e., translations, rotations, and scales. 

From the control point of view, the sensors that agents are equipped with, i.e., the information that agents have access to, play a key role in the design of the corresponding controllers. Distance, bearing, and angle represent partial information about positions. The corresponding control laws to achieve these constraints can also be proposed using partial information on positions. Therefore, agents may require fewer onboard sensors, which reduces the hardware cost and introduces fewer measurement errors. There has been abundant literature on this topic in recent years, such as \cite{nuno2020distributed} and \cite{ji2007distributed} for displacement formations; \cite{anderson2008rigid} and \cite{cao2011formation} for distance formations; \cite{zhao2019bearing} and \cite{trinh2018bearing} for bearing formations; \cite{jing2019angle}, \cite{chen2020angle}, and \cite{li2021angle} for angle formations. 

In this paper, we study the case where the sensing capability of the agents is based on bearing and velocity measurements, and their interaction topology is constrained by angles. We remark that angle constraints require less information on the agents compared with displacement-, distance-, and bearing-based formations. In particular, it is invariant to a large class of maneuvers, such as translation, rotation, and scaling, which means the group of agents can achieve these corresponding maneuvers while satisfying angle-based constraints.  

Recently, passivity-based control (PBC) techniques in combination with the port-Hamiltonian (pH) modeling framework have found favor for the design of formation controllers, such as in \cite{vos2014formation}, \cite{stacey2015passivity}, \cite{xu2018formation}, \cite{li2022passivity}. The advantages of this approach can be summarized as follows: on the one hand, it allows for complex and heterogeneous agent dynamics. Different from most of the existing literature, where the agents are modeled as homogeneous dynamics, and the formation geometric constraints are limited to one kind of position, distance, or bearing, the pH approach can be further applied to heterogeneous systems where the agents are modeled as nonlinear dynamics with the formation shape defined by different kinds of constraints. On the other hand, the passivity property enables the flexibility and scalability of the network due to the fact that the composition of the passive systems is still passive (see \cite{van2000l2}, \cite{khalil2002nonlinear}). In the meantime, according to the theory of pH systems on graphs (see \cite{van2013port}), the Dirac structure underlying the pH dynamics incorporates the network structure, enabling the scalability of the network. Moreover, passivity-based decentralized controllers allow the agents to exert forces based on different relative information with respect to their neighbors, such as relative position, distance, and bearing. 

Customarily, in PBC, the control objectives are achieved by virtual couplings where the virtual springs determine the formation by shaping the energy function of the network, while the virtual dampers modify the transient response by injecting damping. However, the resulting control law in passivity-based approaches usually requires the agents to have complete information of relative position, even if the interaction topology of agents is represented only by angles. To solve this problem, we propose a passive estimator based on the image depth compensator in \cite{mahony2012port}, and its later application for bearing formation control in \cite{stacey2015passivity}, to estimate the unavailable distance information by using velocity and bearing measurements. As a result, the control law only contains bearing and velocity terms.  


Some literature has also studied angle-based formation recently. In \cite{basiri2010distributed} and \cite{chen2020angle}, an intuitive control law is proposed for using only local bearing information and proves the stability by linearization. In \cite{chen2022globally}, an angle-induced linear constraint of each triangle in an angle rigid formation is used to propose the formation control law with local relative position measurements. However, a suitable Lyapunov function for stability analysis is not given in these three papers. In \cite{jing2019angle}, a gradient-like control law is proposed for a single integrator model using bearing and distance information. In addition, the technique of mismatched measurements is used in \cite{chen2021maneuvering} to investigate the maneuvering and stabilization of formation control, and the resulting control law also contains bearing and distance information. In our proposed approach, the agents achieve angle-based formation using bearing and velocity measurements---rather than distance as in the aforementioned literature---by proposing an estimator. Hence, The velocity information permits injecting damping into the system, improving the transient performance, especially for time-varying maneuvers and systems with complex dynamics. In addition, the aforementioned literature on angle-based formation all considers the dynamics of single integrators, while we consider the dynamics of double integrators with the damping term, which is more practical from a physical point of view, but more complicated in terms of controller design and stability analysis. 

Furthermore, as we discussed before, angle constraints permit more maneuvers than distance and bearing constraints. We further investigated some maneuvers of the angle-based formation. Regarding the scale and orientation control, we choose two agents as leaders and control the scale and orientation of the formation by specifying a desired relative position between these two agents. Concerning velocity tracking, we assume all agents know the desired velocity.  


The main contributions of our approach can be summarized in two aspects:
\begin{itemize}
 \item [\textbf{(i)}]We propose a framework for angle-based formation, considering double integrators with damping terms, which can be adapted to more practical physical systems. Due to the pH modeling framework, in addition to angle-based formations, this approach is also applicable to other types of formations and dynamics. Therefore, it can be used to establish a general framework for multi-agent formation with heterogeneous constraints and dynamics. To the best of our knowledge, existing research on angle-based formation only considers the dynamics of single integrators.
 \item [\textbf{(ii)}] Using pH theory and the fact that energy is coordinate-free, we propose a passive estimator for the distance using bearing and velocity measurements. Compared with the literature using bearing and distance measurements for angle-based formation, our framework can inject damping into the system by using the velocity term, improving transient performance for the following design of formation maneuvers. 
\end{itemize}

The rest of the paper is structured as follows. The preliminaries and problem formulation are introduced in Section 2. The control architecture and the stability analysis are developed in Sections 3 and 4. The formation maneuvers are investigated in Section 5. The proposed approach is validated via simulations in Section 6. Then, some concluding remarks are provided in Section 7.

\section{Preliminaries and Problem Formulation}

\subsection{Preliminaries}  
To introduce the basic concepts, we consider the triangular formation as in Fig. 1. Hence, for the link $k$ between the agent $1$ and the agent $2$, we have
\begin{equation}	
z_k=q_1-q_2,
\end{equation} 
where $q_1,q_2 \in \mathbb{R}^2$ denote  the position of agents $1$ and $2$, respectively, and $z_k \in \mathbb{R}^2$ is the relative position associated with the link $k$.

\begin{figure}[!ht]
\begin{center}
\label{}
\includegraphics[width=3.5cm]{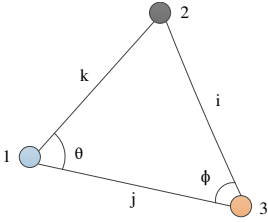}   
\caption{Triangular formation} 
\end{center}
\end{figure}

Regarding the constraints defining the formation geometric shapes, bearings and distances are simpler compared with angles. As shown in Fig. 1, bearings and distances are associated with two nodes and one edge while angles are associated with three nodes and two edges. Therefore, taking bearings and distances as an example, we give some definitions and conclusions. The distance and bearing between the agents $1$ and $2$ are given by 
\begin{equation}
\begin{split}
r_k=||z_k|| ,  \quad
s_k=\frac{z_k}{||z_k||},
\end{split}
\end{equation}
respectively. Moreover, we denote the general form of the constraints of the formation geometric shapes as $y_k\in\mathbb{Y}_k$, where $\mathbb{Y}_k$ is the constraint space. The time-evolution of $y_k$ is given by
\begin{equation}	
\dot{y}_k=L_{y_k}(z_k)\dot{z}_k,
\end{equation}
where $L_{y_k}(z_k)=\frac{\partial{y_k}}{\partial{z_k}}(z_k)$ is the constraint Jacobian. Accordingly, for the bearing $s_k$ and the distance $r_k$, the corresponding Jacobians are given, respectively, by
\begin{eqnarray}
\label{bj}	
L_{s_k}(z_k)&=&\frac{1}{r_k}(I_2-s_k{s_k}^T)\in\mathbb{R}^{2\times2},\\
L_{r_k}(z_k)&=&s_k^T\in\mathbb{R}^{1\times2},
\end{eqnarray}
where $I_2$ is the $2\times 2$ dimensional identity matrix. Henceforth, we omit the argument in the Jacobians $L_{s_k}$ and $L_{r_k}$ to simplify the notation.

Define $y_k^*$ as the desired value of the constraint $y_k$, corresponding to a desired formation. Then, we define the error
\begin{equation} \label{error}	
\tilde{y}_k:=y_k-y_k^*.
\end{equation}

Hence, the control objective is to ensure that
$$\lim_{t \to \infty}  \tilde{y}_k = 0.$$
To this end, we consider that the controller represents virtual couplings between the agents; see \cite{van2014port} and \cite{duindam2009modeling}. Specifically, virtual springs determine the formation shape, while virtual dampers inject damping, shaping the transient response. To illustrate this idea, we consider the link $k$ and the Hamiltonian (energy) function 
\begin{equation*}
    H_k(\tilde{y}_k):=\frac{1}{2}\tilde{y}_k^Tc_k\tilde{y}_k,
\end{equation*}
 where $c_k$ represents the constant stiffness matrix associated with the virtual springs. Note that the dimension of $c_k$ is determined by the constraint $y_k$. Hence, we propose the following dynamical system
 \begin{equation} \label{ctld}
\begin{array}{rcl}
     \dot{\tilde{y}}_k&=&\omega_k, \\
     \gamma_k&=&\frac{\partial{H_k}}{\partial{\tilde{y}_k}}+d_k\omega_k,
\end{array}
\end{equation} 
where the velocity of the constraint $\omega_k\in T_{y_k}\mathbb{Y}_k$---with $T_{y_k}\mathbb{Y}_k$ the tangent space of $\mathbb{Y}_k$---can be understood as the input of the controller system, while the output $\gamma_k$ corresponds to the force exerted by the virtual couplings. In particular, $\frac{\partial{H_k}}{\partial{\tilde{y}_k}}$ represents the forces due to the virtual springs, and $d_k\omega_k$ the forces due to the virtual dampers, where $d_k \geq 0$ correspond to a constant dissipation matrix.
 

Note that the dynamics of the virtual couplings are expressed in the constraint space. However, to implement the controller, they can be transformed to $\mathbb{R}^2$ by means of the constraint Jacobian, i.e.,
\begin{equation}	
\epsilon_k=L_{y_k}^T\gamma_k,
\end{equation}  
where $\epsilon_k$ is the virtual force in $\mathbb{R}^2$.

According to port-Hamiltonian theory, we define $<e|f>:=e^Tf$ as the power of a port with an effort $e$ and a flow $f$. Regarding the controller (\ref{ctld}), $\gamma_k$ is the effort, and $\dot{\tilde{y}}_k$ is the flow in the constraint space. Correspondingly, $\epsilon_k$ and $\dot{z}_k$ are the effort and flow in $\mathbb{R}^2$, respectively. Note that the power in different spaces is the same since energy is coordinate-free.  

\subsection{Problem Formulation}

Consider a group of $N$ agents with the topology of information exchange between these agents described by a graph $\mathcal{G}(\mathcal{V}_N,\mathcal{E}_E)$. It consists of a node set $\mathcal{V}$, where $\mathcal{V}=\{n_1,n_2,...,n_N\}$, and an edge set $\mathcal{E}\subseteq \mathcal{V} \times \mathcal{V}$, where $\mathcal{E}=\{e_1,e_2,...,e_E\}$. The incidence matrix $B \in \mathbb{R}^{N\times E}$ describes the relationship between the nodes and the edges, and it takes the following form:
$$
b_{ik}=
\begin{cases}
+1 & \text{if node $i$ is at the positive side of edge $k$},\\
-1 & \text{if node $i$ is at the negative side of edge $k$},\\
0  & \text{otherwise}.
\end{cases}
$$
In our work, we consider the particular class of strongly nondegenerate triangulated Laman graphs introduced in \cite{chen2017global}. Moreover, we consider that the interaction topology is determined only by angles. A triangulated Laman graph is constructed from a line graph with two nodes; then, every newly added node is connected by two existing nodes which are also connected. An example is shown in Fig. 2 with $M$ triangles. The formation shape is determined by the marked angles. Note that assuming the underlying graph as a triangulated Laman graph is not restrictive. Any angle-based formation shape can be designed as a triangulated Laman graph with a new set of angle constraints.

Note that the graph considered here is undirected and the angle rigidity is ensured by Lemma 1 in \cite{jing2019angle}, which is repeated below for the sake of completeness.

\begin{figure}[!ht]
\begin{center}
\label{Lamangraph_m}
\includegraphics[width=4.0cm]{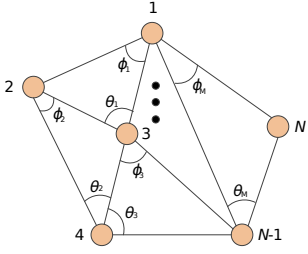}   
\caption{Triangulated Laman graph with $M$ triangles} 
\end{center}
\end{figure}

\begin{lemma} 
A triangulated Laman framework $(\mathcal{G}_N,q)$ is strongly nondegenerate only if $(\mathcal{G}_N,q)$ is globally angle rigid.
\end{lemma}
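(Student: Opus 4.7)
The plan is to argue by induction on the number of triangles $M$ in the triangulated Laman graph, exploiting the recursive construction described just before the lemma: starting from a two-node edge, each successive node is attached to two already-adjacent nodes, producing exactly one new triangle at a time. Since ``strongly nondegenerate only if globally angle rigid'' means strong nondegeneracy implies global angle rigidity, the task is to show that any two configurations satisfying the prescribed angle constraints are related by translation, rotation, and uniform scaling.

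For the base case $M=1$ (a single triangle), I would use the elementary fact that a 2D triangle is determined up to similarity by its interior angles: the angle-sum identity together with the two angle constraints carried along in the triangulated Laman construction fixes the shape, so any two configurations satisfying the same angles on this triangle differ only by translation, rotation, and uniform scaling. Strong nondegeneracy rules out the collapsed collinear case where the ``triangle'' has a zero or $\pi$ angle.

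For the inductive step, I would assume the subframework on the first $M-1$ triangles is globally angle rigid and consider the insertion of the new vertex $v_{M+2}$ joined to two mutually adjacent vertices $v_i,v_j$, together with the two new angle constraints at $v_i$ and $v_j$ inside the new triangle. Relative to the already-determined (up to similarity) positions of $v_i$ and $v_j$, each of these two angle constraints pins down a ray emanating from $v_i$ and from $v_j$, and $v_{M+2}$ must sit at their intersection. This determines the new vertex uniquely up to reflection across the line $v_iv_j$; strong nondegeneracy, together with the signed/oriented interpretation of the angle constraint (carried along from the incidence matrix $B$ and the chosen ordering of edges at each node), excludes the reflected placement, so the new configuration is again fixed up to global similarity and the induction closes.

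The main obstacle will be handling the reflection ambiguity cleanly and reconciling it with the precise technical definition of ``strongly nondegenerate'' given in \cite{chen2017global}: one has to verify that the notion of angle used here is sensitive enough to distinguish a configuration from its reflection across $v_iv_j$, and that strong nondegeneracy in particular excludes the parameter values at which the two rays from $v_i$ and $v_j$ become parallel, anti-parallel, or coincident (which would either admit infinitely many placements of $v_{M+2}$ or collapse the new triangle). Once these edge cases are ruled out and the orientation bookkeeping is aligned with the triangulated Laman construction, the induction yields global angle rigidity for all $M$, which is exactly what is needed to invoke the lemma in the stability analysis that follows.
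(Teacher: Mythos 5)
First, a point of reference: the paper does not prove this lemma at all---it is imported verbatim from \cite{jing2019angle} ``for the sake of completeness,'' so there is no in-paper proof to compare against. Your induction-on-triangles skeleton is nevertheless the natural one and is essentially how the cited literature argues: one new vertex per step, attached to two adjacent vertices, located at the intersection of two rays determined by the new angle constraints, with the degenerate configurations (rays parallel or coincident, i.e., angles at $0$ or $\pi$) excluded by nondegeneracy. That part of your plan is sound.

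The genuine gap is in your mechanism for killing the reflection ambiguity. The angle in this paper is defined by $\cos\theta = s_k^T s_j$, which is invariant under reflection; and the incidence matrix $B$ only encodes edge orientations (which endpoint is positive), not any angular or rotational orientation of the plane. So there is no ``signed/oriented interpretation of the angle constraint carried along from the incidence matrix'' available to you---that object does not exist in this setup, and reflecting the newly placed vertex $v_{M+2}$ across the line $v_iv_j$ produces a second configuration satisfying every cosine constraint. The exclusion of that reflected placement has to come from the definition of \emph{strong nondegeneracy} itself in \cite{chen2017global}, which (roughly) requires all triangles of the triangulation to carry a consistent orientation, i.e., signed areas of uniform sign bounded away from zero; a partial reflection flips the sign of the new triangle relative to the rest and thus exits the strongly nondegenerate class, while a global reflection is either quotiented out or handled separately in the statement of global angle rigidity. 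Until you replace your invented orientation bookkeeping with the actual nondegeneracy condition from \cite{chen2017global} (or from Lemma 1 of \cite{jing2019angle}), the inductive step does not close: each stage of your induction admits two placements of the new vertex and nothing in your argument selects between them.
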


A framework is composed of a graph $\mathcal{G}_N$ and a realization which are positions of a group of agents, $q$. Angle rigidity means that the formation geometric shape is uniquely determined by a set of angles up to translations, scaling, and rotations by inner angles. For more details, see \cite{chen2020angle} and \cite{jing2019angle}. 

\begin{remark}
Assuming the underlying graph as a triangulated Laman graph with $M$ triangles is not restrictive, since any angle-based formation shape can be designed as a triangulated Laman graph with a new set of angle constraints. We remark that $2M$ is the least number of angle constraints to guarantee the angle rigidity. 
\end{remark}

We consider that the dynamics of the agents are given by double integrators with damping terms in $\mathbb{R}^2$, which are expressed in pH form as
\begin{equation} \label{model}	
\begin{array}{rcl}
\matr{\dot{q}_n  \\ \dot{p}_n } &=& \matr{0 & I_2  \\ -I_2 & -D_n^r }  \matr{\frac{\partial{H_n}}{\partial{q_n}}  \\[0.3cm]  \frac{\partial{H_n}}{\partial{p_n}}} + \matr{ 0  \\ I_2}U_n,\\[0.8cm]
H_n(p_{n})&=& \frac{1}{2m_n}p_n^Tp_n, \quad
Y_n=\frac{\partial{H_n}}{\partial{p_n}}(p_{n}),
\end{array}
\end{equation} 
where $q_n=(q_{x_n},q_{y_n})^T \in \mathbb{R}^2$, $n\in \{1,2,3\}$ denotes the position of the agent $n$, $p_n=(p_{x_n},p_{y_n})^T=(m_n\dot{q}_{x_n},m_n\dot{q}_{y_n})^T \in \mathbb{R}^2$ denotes the momentum, and $m_n$ denotes the mass. Moreover, $U_n=(U_{x_n},U_{y_n})^T \in \mathbb{R}^2$ and $Y_n=(Y_{x_n},Y_{y_n})^T \in \mathbb{R}^2$ denote the input and output, respectively, and $H_n(p_n)$ is the Hamiltonian of the agent $n$.  

We assume that each agent has access to bearing and relative velocity information. For relative velocity information, each agent is either able to measure and communicate its own velocity or measure the relative velocity directly. In addition, we assume that the communication topology is connected in each triangle, therefore, the agent has access to the bearing measurement of the non-adjacent edge. For example, as shown in Fig. 1, agent $1$ has access to the bearing measurement $s_i$. Since the constraints of the formation geometric shape are given by angles and agents can measure the bearings of their neighboring agents, we use the cosine function of the angle to represent the value of the angle, which can be easily calculated by bearing measurements. Hence, for the angle $\theta$, the definition is given by 
\begin{equation}	
\cos \theta:= s_k^Ts_j. \label{costheta}
\end{equation} 
The objective of this paper is to design a controller using only bearing and velocity measurements that ensures the group of agents with an underlying triangulated Laman graph, modeled by (\ref{model}) to achieve a desired formation constrained by angles, while simultaneously achieving velocity tracking, and scale and orientation control.

\section{Formation Stabilization}

Every triangulated Laman graph consists of several triangular graphs. Hence, we first present the controller for a specific triangular formation as the one shown in Fig. 1, where $1, 2, 3$ are agents and $i, j, k$ are the edges. The angles $\theta$ and $\phi$ are the ones to be controlled. Since the summation of three angles equals $\pi$, we only need to control $\theta$ and $\phi$.

From the definition \eqref{costheta}, the time-evolution of $\cos\theta$ can be derived as 
\begin{equation}
\begin{split}
\dot{\widetilde{(s_k^Ts_j)}}&=(L_{s_k}\dot{z}_k)^Ts_j+s_k^T(L_{s_j}\dot{z}_j)
              \\&=-(s_j^TL_{s_k}+s_k^TL_{s_j})\dot{q}_1+s_j^TL_{s_k}\dot{q}_2+s_k^TL_{s_j}\dot{q}_3
              \\&=L_{\theta 1}\dot{q}_1+L_{\theta 2}\dot{q}_2+L_{\theta 3}\dot{q}_3.
\end{split}
\end{equation}

Here, $L_{\theta 1},L_{\theta 2},L_{\theta 3}$ are the angle Jacobians mapping from the positions of agents $1$,$2$, and $3$, respectively, to $\theta$. Such angle Jacobians are given as follows
\begin{equation} \label{l1}
\begin{split}
&L_{\theta 1}=-(s_j^TL_{s_k}+s_k^TL_{s_j}),\\
&L_{\theta 2}=s_j^TL_{s_k},\\
&L_{\theta 3}=s_k^TL_{s_j}.
\end{split}
\end{equation}
Similarly, the angle Jacobians mapping from the positions of agents $1$, $2$, and $3$, respectively, to $\phi$ are given as
\begin{equation} \label{l2}
\begin{split}
&L_{\phi 1}=-s_i^TL_{s_j},\\
&L_{\phi 2}=-s_j^TL_{s_i},\\
&L_{\phi 3}=s_j^TL_{s_i}+s_i^TL_{s_j}.
\end{split}
\end{equation}

We can conclude from the Jacobians that the control law of each agent contains two parts: one to steer $\theta$ and the other to steer $\phi$.

Now we consider triangulated Laman graph with $M$ triangles. The angle Jacobian of each pair of angles $(\theta_l, \phi_l), l \in \{1, 2, ..., M\}$ is related to the corresponding three agents forming triangle $l$. The compact form is derived as

\begin{equation} \label{arm}
\begin{split}
\bordermatrix{
         & 1 &\cdots & n_1 &\cdots & n_2 &\cdots & n_3 &\cdots & N \cr 
\theta_1 &\cdots &\cdots &\cdots &\cdots &\cdots &\cdots &\cdots &\cdots &\cdots \cr
\phi_1   &\cdots &\cdots &\cdots &\cdots &\cdots &\cdots &\cdots &\cdots &\cdots \cr
\cdots   &\cdots &\cdots &\cdots &\cdots &\cdots &\cdots &\cdots &\cdots &\cdots \cr
\theta_l &\cdots &\cdots &L_{\theta_l n_1} &\cdots &L_{\theta_l n_2} &\cdots &L_{\theta_l n_3} &\cdots &\cdots \cr
\phi_l   &\cdots &\cdots &L_{\phi_l n_1} &\cdots &L_{\phi_l n_2} &\cdots &L_{\phi_l n_3} &\cdots &\cdots \cr
\cdots   &\cdots &\cdots &\cdots &\cdots &\cdots &\cdots &\cdots &\cdots &\cdots \cr
\theta_M   &\cdots &\cdots &\cdots &\cdots &\cdots &\cdots &\cdots &\cdots &\cdots \cr
\phi_M   &\cdots &\cdots &\cdots &\cdots &\cdots &\cdots &\cdots &\cdots &\cdots  
}.
\end{split}
\end{equation}

\subsection{Controller of agent 1 for $\theta$}

We first consider the controller of agent $1$. Since the movement of agent $1$ affects both $\theta$ and $\phi$, the controller of agent $1$ consists of two parts. One is to satisfy the constraint on $\theta$, and the other satisfies the constraint on $\phi$.

For $ \theta$, we use the cosine function to represent the angle measurement. Given \eqref{costheta}, the control aim is to design a controller to ensure that $s_k^Ts_j$ converge to the desired value $({s_k^Ts_j})^*$. Hence, we define the error
\begin{equation} \label{err1}
\begin{split}
\widetilde{(s_k^Ts_j)}:=(s_k^Ts_j)-({s_k^Ts_j})^*.
\end{split}
\end{equation}

In order to guarantee that the system converges to the desired state, i.e., the error (\ref{err1}) converges to zero, it is necessary to assign a potential energy of the angle to the closed-loop system. To this end, we propose the following Hamiltonian function
\begin{equation}
\begin{split}
H_{\theta 1}=\frac{1}{2}c_{\theta 1}\widetilde{(s_k^Ts_j)}^2,
\end{split}
\end{equation}
where $c_{\theta 1} \in \mathbb{R}_+$ is a constant.

The corresponding port-Hamiltonian controller composed of a spring and a damper is given as
\begin{equation} \label{ga1a}
\begin{array}{rcl}
\dot{\widetilde{(s_k^Ts_j)}}&=&\omega_{\theta 1},\\[0.2cm]
\gamma_{\theta 1}&=&\frac{\partial{H_{\theta 1}}}{\partial{\widetilde{(s_k^Ts_j)}}}+d_{\theta 1}\omega_{\theta 1},
\end{array}
\end{equation}
where $\omega_{\theta 1}$ denotes the input of the controller, and $d_{\theta 1}\in \mathbb{R}_+$ is the corresponding damping parameter. Here, $\gamma_{\theta 1}$ is the resulting virtual force in the space of angle constraint. Furthermore, the power of the port is given as
\begin{equation}
\begin{split}
 <\gamma_{\theta 1}|\dot{\widetilde{(s_k^Ts_j)}}>=\gamma_{\theta 1}^T\dot{\widetilde{(s_k^Ts_j)}}.
\end{split}
\end{equation}
where $\dot{\widetilde{(s_k^Ts_j)}}$ is the flow and $\gamma_{\theta 1}$ is the effort. 
Here we only consider a part of the time derivative of the angle $\theta$ caused by the movement of agent $1$. To transform the power from the angle space to $\mathbb{R}^2$, we compute
\begin{equation} \label{transp}
\begin{split}
<\gamma_{\theta 1}|\dot{\widetilde{(s_k^Ts_j)}}>&=<\gamma_{\theta 1}|L_{\theta 1}\dot{q}_1>\\&=<L_{\theta 1}^T\gamma_{\theta 1}|\dot{q}_1>\\&=<-L_{s_k}^Ts_j\gamma_{\theta 1}|\dot{q}_1>+<-L_{s_j}^Ts_k\gamma_{\theta 1}|\dot{q}_1>.
\end{split}
\end{equation}

The effort of the port in (\ref{transp}) relies on the distance information which is not measurable. In order to avoid distance measurement, we use the relative velocity measurement to estimate the unknown distance.

Note that since we assume that the distance is estimated rather than measured, the distance term in the angle Jacobian needs to be replaced accordingly. As a result, the estimated angle Jacobian is given by
\begin{equation} \label{eL1a}
\begin{split}
\hat{L}_{\theta 1}&=-(s_j^T\hat{L}_{\theta s_k}+s_k^T\hat{L}_{\theta s_j})\\&=-s_j^T\frac{1}{\hat{r}_{\theta k}}(I_2-s_k{s_k}^T)-s_k^T\frac{1}{\hat{r}_{\theta j}}(I_2-s_j{s_j}^T),
\end{split}
\end{equation}
where $\hat{r}_{\theta k}$ is the distance estimate of the edge $k$ for the control of angle $\theta$, and $\hat{r}_{\theta j}$ is the distance estimate of edge $j$ for the control of angle $\theta$. Correspondingly, $\hat{L}_{\theta s_k},\hat{L}_{\theta s_j}$ are the estimated bearing Jacobian of $s_k$ and $s_j$, respectively, for the control of angle $\theta$.

However if $\hat{L}_{\theta 1}$ is used to replace $L_{\theta 1}$ in the right side of (\ref{transp}), the equation is not satisfied because the effort $\gamma_{\theta 1}$ corresponds to the real flow $\frac{{\rm d}(s_k^Ts_j)}{{\rm d} t}$ in the angle space. It causes the discrepancy of the power through the virtual coupling due to the error between the estimated distance and the real unknown distance. To handle the discrepancies, we define the distance errors as $\bar{r}_{\theta k}:=\hat{r}_{\theta k}-r_k$, and $\bar{r}_{\theta j}:=\hat{r}_{\theta j}-r_j$. To calculate the estimated effort in $\mathbb{R}^2$, we have
\begin{equation} \label{alpha}
\begin{split}
-(\hat{L}_{\theta s_k}^Ts_j+\hat{L}_{\theta s_j}^Ts_k)\gamma_{\theta 1}=-L_{s_k}^Ts_j\alpha_{\theta k}-L_{s_j}^Ts_k\alpha_{\theta j},
\end{split}
\end{equation}
where $\alpha_{\theta k}=\frac{r_k}{\hat{r}_{\theta k}}\gamma_{\theta 1}$ is the estimated effort related to $\hat{r}_{\theta k}$ and $\alpha_{\theta j}=\frac{r_j}{\hat{r}_{\theta j}}\gamma_{\theta 1}$ is the estimated effort related to $\hat{r}_{\theta j}$. 

Furthermore, considering the ports in different spaces, we have that 
\begin{equation} \label{power1}
\begin{split}
<\hat{L}_{\theta 1}^T\gamma_{\theta 1}|\dot{q}_1>&=<-L_{s_k}^Ts_j\alpha_{\theta k}-L_{s_j}^Ts_k\alpha_{\theta j}|\dot{q}_1>\\& =<\alpha_{\theta k}|\dot{\widetilde{(s_k^Ts_j)}}>+<\alpha_{\theta j}|\dot{\widetilde{(s_k^Ts_j)}}>.
\end{split}
\end{equation}

Comparing (\ref{transp}) with (\ref{power1}), the discrepancy between the real effort and the estimated effort can be derived as
\begin{equation} \label{beta}
\begin{split}
\beta_{\theta k}=\alpha_{\theta k}-\gamma_{\theta 1}=-\frac{\bar{r}_k}{\hat{r}_{\theta k}}\gamma_{\theta 1},\\
\beta_{\theta j}=\alpha_{\theta j}-\gamma_{\theta 1}=-\frac{\bar{r}_j}{\hat{r}_{\theta j}}\gamma_{\theta 1}.
\end{split}
\end{equation}

The power of ports with $\beta_{\theta k}$, $\beta_{\theta j}$ as the efforts are given by 
\begin{equation} \label{aport}
\begin{split}
<\beta_{\theta k}|-s_j^TL_{s_k}\dot{q}_1>,\\\quad <\beta_{\theta j}|-s_k^TL_{s_j}\dot{q}_1>. 
\end{split}
\end{equation}

To account for the power associated with the ports in distance space, we define the corresponding Hamiltonians as
\begin{equation} \label{Hbeta}
\begin{split}
H_{\theta k}:=\frac{1}{2}c_{\theta k}\bar{r}_{\theta k}^2,\\
H_{\theta j}:=\frac{1}{2}c_{\theta j}\bar{r}_{\theta j}^2,
\end{split}
\end{equation}
where $c_{\theta k},c_{\theta j} \in \mathbb{R}_+$ are constants. 

The power of the ports in distance space is given by
\begin{equation} \label{dport}
\begin{split}
<\frac{\partial{H_{\theta k}}}{\partial{\bar{r}_{\theta k}}}|\dot{\bar{r}}_{\theta k}>=<c_{\theta k}\bar{r}_{\theta k}|\dot{\bar{r}}_{\theta k}>,\\
<\frac{\partial{H_{\theta j}}}{\partial{\bar{r}_{\theta j}}}|\dot{\bar{r}}_{\theta j}>=<c_{\theta j}\bar{r}_{\theta j}|\dot{\bar{r}}_{\theta j}>.
\end{split}
\end{equation}

Since the energy is coordinate-free, the power in angle space and distance space are the same. Therefore, comparing (\ref{aport}) and (\ref{dport}), we have 
\begin{equation} \label{br1k}
\begin{array}{rrcl}
     &<\beta_{\theta k}|-s_j^TL_{s_k}\dot{q}_1>&=&<c_{\theta k}\bar{r}_{\theta k}|\dot{\bar{r}}_{\theta k}>\\
     \Rightarrow&c_{\theta k}\bar{r}_{\theta k}^T\dot{\bar{r}}_{\theta k}&=&-\frac{\bar{r}_k}{\hat{r}_{\theta k}}\gamma_{\theta 1}^T(-s_j^TL_{s_k}\dot{q}_1)\\
     \Rightarrow&\dot{\bar{r}}_{\theta k}&=&-\frac{\gamma_{\theta 1}^T}{c_{\theta k}\hat{r}_{\theta k}}(-s_j^TL_{s_k}\dot{q}_1).
\end{array}
\end{equation}

Similarly, 
\begin{equation} \label{br1j}
\begin{split}
\dot{\bar{r}}_{\theta j}=-\frac{\gamma_{\theta 1}^T}{c_{\theta j}\hat{r}_{\theta j}}(-s_k^TL_{s_j}\dot{q}_1).
\end{split}
\end{equation}

Furthermore, the dynamics of the estimators are given by
\begin{equation} \label{er1k}
\begin{split}
\dot{\hat{r}}_{\theta k}&=\dot{r}_k+\dot{\bar{r}}_{\theta k}\\
&=s_k^T\dot{z}_k-\frac{\gamma_{\theta 1}^T}{c_{\theta k}\hat{r}_{\theta k}}(-s_j^TL_{s_k}\dot{q}_1).
\end{split}
\end{equation}
Similarly,
\begin{equation} \label{er1j}
\begin{split}
\dot{\hat{r}}_{\theta j}&=\dot{r}_j+\dot{\bar{r}}_{\theta j}\\
&=s_j^T\dot{z}_j-\frac{\gamma_{\theta 1}^T}{c_{\theta j}\hat{r}_{\theta j}}(-s_k^TL_{s_j}\dot{q}_1).
\end{split}
\end{equation}

Note that we only use the information of relative velocity and bearing measurement in the above estimators, while distance measurement is not used.

The resulting control law of agent $1$ for angle $\theta$ is given by feedback connection of the original system and the effort in $\mathbb{R}^2$ with the estimated distance, which can be summarized by substituting (\ref{ga1a}) and (\ref{eL1a}) into (\ref{alpha}) and adding a negative sign as
\begin{equation} 
\begin{split}
U_{\theta 1}=&-\overbrace{[-\frac{1}{\hat{r}_{\theta k}}(I_2-s_k{s_k}^T)^Ts_j-\frac{1}{\hat{r}_{\theta j}}(I_2-s_j{s_j}^T)^Ts_k]}^{\hat{L}_{\theta 1}}\\&
\underbrace{[c_{\theta 1}{{\widetilde{(s_k^Ts_j)}}}+d_{\theta 1}\dot{\widetilde{(s_k^Ts_j)}}]}_{\gamma_{\theta 1}}.
\end{split}
\end{equation}

\subsection{Controller of agent $1$ for $\phi$}
In this section, we design the controller of agent $1$ for the control of angle $\phi$. Define the Hamiltonian
\begin{equation}
\begin{split}
H_{\phi 1}:=\frac{1}{2}c_{\phi 1}\widetilde{(s_i^Ts_j)}^2,
\end{split}
\end{equation}
where $c_{\phi 1} \in \mathbb{R}_+$ is a constant. The controller with spring and damping terms are given by
\begin{equation}
\begin{array}{rcl}
    \dot{\widetilde{(s_i^Ts_j)}}&=&\omega_{\phi 1},\\
\gamma_{\phi 1}&=&\frac{\partial{H_{\phi 1}}}{\partial{\widetilde{(s_i^Ts_j)}}}+d_{\phi 1}\omega_{\phi 1},
\end{array}
\end{equation}
where $\omega_{\phi 1}$ denotes the input of the controller. $d_{\phi 1} \in \mathbb{R}_+$ is the corresponding damping parameter.

Considering the ports in different spaces, we have that 
\begin{equation} \label{eL2a}
\begin{split}
<\alpha_{\phi 1}|(-s_i^TL_{s_j}\dot{q}_1)>=<\hat{L}_{\phi 1}^T\gamma_{\phi 1}|\dot{q}_1>,\\\hat{L}_{\phi 1}=s_i^T\hat{L}_{\phi s_k}=s_i^T\frac{1}{\hat{r}_{\phi k}}(I_2-s_k{s_k}^T),
\end{split}
\end{equation}
where $\hat{L}_{\phi 1}$ is the estimated angle Jacobian mapping from the position of the agent $1$ to the angle $\phi$ and $\hat{L}_{\phi s_k}$ is the estimated bearing Jacobian of $s_k$ for the control of angle $\phi$. Furthermore, $\hat{r}_{\phi k}$ is the estimated distance of the edge $k$ for the control of angle $\phi$.

Taking the same steps as in Section 3.1 leads to the following estimator
\begin{equation} \label{er2ka}
\begin{split}
\dot{\hat{r}}_{\phi k}=\dot{r}_{k}+\dot{\bar{r}}_{\phi k}=s_k^T\dot{z}_k-\frac{\gamma_{\phi 1}^T}{c_{\phi k}\hat{r}_{\phi k}}(-s_i^TL_{s_j}\dot{q}_1).
\end{split}
\end{equation}
where $c_{\phi k}  \in \mathbb{R}_+$ is a constant.

Correspondingly, the control law of the agent $1$ for the angle $\phi$ is given by
\begin{equation} 
\begin{split}
U_{\phi 1}=-\underbrace{\frac{1}{\hat{r}_{\phi k}}(I_2-s_k{s_k}^T)^Ts_i}_{\hat{L}_{\phi 1}}\underbrace{[c_{\phi 1}{{\widetilde{(s_i^Ts_j)}}}+d_{\phi 1}\dot{\widetilde{(s_i^Ts_j)}}]}_{\gamma_{\phi 1}}.
\end{split}
\end{equation}


To sum up, the control law of the agent $1$ is given by
\begin{equation} \label{ctra}
\begin{split}
U_1&=U_{\theta 1}+U_{\phi 1}\\&-[\frac{1}{\hat{r}_{\theta k}}(I_2-s_k{s_k}^T)^Ts_j+\frac{1}{\hat{r}_{\theta j}}(I_2-s_j{s_j}^T)^Ts_k]\\&
[c_{\theta 1}{{\widetilde{(s_k^Ts_j)}}}+d_{\theta 1}\dot{\widetilde{(s_k^Ts_j)}}]
\\&-\frac{1}{\hat{r}_{\phi k}}(I_2-s_k{s_k}^T)^Ts_i[c_{\phi 1}{{\widetilde{(s_i^Ts_j)}}}+d_{\phi 1}\dot{\widetilde{(s_i^Ts_j)}}].
\end{split}
\end{equation}

\subsection{Controllers for the agents $2$ and $3$}
The controller design for the agents $2$ and $3$ is similar to the agent $1$, so we omit some details for the following analysis.

For the agent $2$, there are two parts in the control law. The first part is to control the angle $\theta$, whose corresponding Hamiltonian and controller in the angle space are given by
\begin{equation}
\begin{split}
H_{\theta 2}=\frac{1}{2}c_{\theta 2}\widetilde{(s_k^Ts_j)}^2,
\end{split}
\end{equation}
\begin{equation} \label{ga1b}
\begin{split}
\dot{\widetilde{(s_k^Ts_j)}}=\omega_{\theta 2},\\\gamma_{\theta 2}=\frac{\partial{H_{\theta 2}}}{\partial{\widetilde{(s_k^Ts_j)}}}+d_{\theta 2}\omega_{\theta 2},
\end{split}
\end{equation}
where $\omega_{\theta 2}$ denotes the input of the controller. $c_{\theta 2},d_{\theta 2}  \in \mathbb{R}_+$ are constants. The corresponding estimated angle Jacobian and distance estimators are given by
\begin{equation} \label{eL1b}
\begin{split}
\hat{L}_{\theta 2}=s_j^T\frac{1}{\hat{r}_{\theta i}}(I_2-s_i{s_i}^T),
\end{split}
\end{equation}
\begin{equation} \label{er1ib}
\begin{split}
\dot{\hat{r}}_{\theta i}=\dot{r}_{i}+\dot{\bar{r}}_{\theta i}=s_i^T\dot{z_i}-\frac{\gamma_{\theta 2}^T}{c_{\theta i}\hat{r}_{\theta i}}(s_j^TL_{s_k}\dot{q}_2).
\end{split}
\end{equation}

The second part is to control the angle $\phi$, whose corresponding Hamiltonian and controller are given by
\begin{equation}
\begin{split}
H_{\phi 2}=\frac{1}{2}c_{\phi 2}\widetilde{(s_i^Ts_j)}^2,
\end{split}
\end{equation}
\begin{equation}
\begin{split}
\dot{\widetilde{(s_i^Ts_j)}}=\omega_{\phi 2},\\\gamma_{\phi 2}=\frac{\partial{H_{\phi 2}}}{\partial{\widetilde{(s_i^Ts_j)}}}+d_{\phi 2}\omega_{\phi 2},
\end{split}
\end{equation}
where $\omega_{\phi 2}$ denotes the input of the controller, and $c_{\phi 2},d_{\phi 2}  \in \mathbb{R}_+$ are constants. The corresponding estimated angle Jacobian and distance estimators are given by
\begin{equation} \label{eL2b}
\begin{split}
\hat{L}_{\phi 2}=s_j^T\frac{1}{\hat{r}_{\phi k}}(I_2-s_k{s_k}^T),
\end{split}
\end{equation}
\begin{equation} \label{er2kb}
\begin{split}
\dot{\hat{r}}_{\phi k}=\dot{r}_{k}+\dot{\bar{r}}_{\phi k}=s_k^T\dot{z}_k-\frac{\gamma_{\phi 2}^T}{c_{\phi k}\hat{r}_{\phi k}}(-s_j^TL_{s_i}\dot{q}_2).
\end{split}
\end{equation}

Note that the expressions (\ref{er2ka}) and (\ref{er2kb}) are both distance estimators of the edge $k$, but (\ref{er2ka}) is estimated by the agent $1$, while (\ref{er2kb}) is estimated by the agent $2$.

The power from the port related to agent $2$ is given by
\begin{equation} 
\begin{split}
<\hat{L}_{\theta 2}^T\gamma_{\theta 2}+\hat{L}_{\phi 2}^T\gamma_{\phi 2}|\dot{q}_2>.
\end{split}
\end{equation}
Correspondingly, the control law of the agent $2$ is given by
\begin{equation} \label{ctrb}
\begin{split}
U_{2}&=U_{\theta 2}+U_{\phi 2}\\&=-\overbrace{\frac{1}{\hat{r}_{\theta i}}(I_2-s_i{s_i}^T)^Ts_j}^{\hat{L}_{\theta 2}}
\overbrace{[c_{\theta 2}{{\widetilde{(s_k^Ts_j)}}}+d_{\theta 2}\dot{\widetilde{(s_k^Ts_j)}}]}^{\gamma_{\theta 2}}\\
&\quad-\underbrace{\frac{1}{\hat{r}_{\phi k}}(I_2-s_k{s_k}^T)^Ts_j}_{\hat{L}_{\phi 2}}\underbrace{[c_{\phi 2}{{\widetilde{(s_i^Ts_j)}}}+d_{\phi 2}\dot{\widetilde{(s_i^Ts_j)}}]}_{\gamma_{\phi 2}}.
\end{split}
\end{equation}

For the agent $3$, there are also two parts in the control law. The first part is to control the angle $\phi$, whose corresponding Hamiltonian and controller in angle space are given by
\begin{equation}
\begin{split}
H_{\phi 3}=\frac{1}{2}c_{\phi 3}\widetilde{(s_i^Ts_j)}^2,
\end{split}
\end{equation}
\begin{equation}
\begin{split}
\dot{\widetilde{(s_i^Ts_j)}}=\omega_{\phi 3},\\\gamma_{\phi 3}=\frac{\partial{H_{\phi 3}}}{\partial{\widetilde{(s_i^Ts_j)}}}+d_{\phi 3}\omega_{\phi 3},
\end{split}
\end{equation}
where $\omega_{\phi 3}$ denotes the input of the controller, and $c_{\phi 3},d_{\phi 3}  \in \mathbb{R}_+$ are constants. The corresponding estimated angle Jacobian and distance estimators are given
\begin{equation} \label{eL2c}
\begin{split}
\hat{L}_{\phi 3}=s_j^T\frac{1}{\hat{r}_{\phi i}}(I_2-s_i{s_i}^T)+s_i^T\frac{1}{\hat{r}_{\phi j}}(I_2-s_j{s_j}^T),
\end{split}
\end{equation}
\begin{equation} \label{er2i}
\begin{split}
\dot{\hat{r}}_{\phi i}=\dot{r}_{i}+\dot{\bar{r}}_{\phi i}=s_i^T\dot{z}_i-\frac{\gamma_{\phi 3}^T}{c_{\phi i}\hat{r}_{\phi i}}(s_j^TL_{s_i}\dot{q}_3),
\end{split}
\end{equation}
\begin{equation} \label{er2j}
\begin{split}
\dot{\hat{r}}_{\phi j}=\dot{r}_{j}+\dot{\bar{r}}_{\phi j}=s_j^T\dot{z}_j-\frac{\gamma_{\phi 3}^T}{c_{\phi j}\hat{r}_{\phi j}}(s_i^TL_{s_j}\dot{q}_3),
\end{split}
\end{equation}
where $c_{\phi i},c_{\phi j}  \in \mathbb{R}_+$ are constants.

The second part is to control the angle $\theta$. The corresponding Hamiltonian and controller in angle space are given by
\begin{equation}
\begin{split}
H_{\theta 3}=\frac{1}{2}c_{\theta 3}\widetilde{(s_k^Ts_j)}^2,
\end{split}
\end{equation}
\begin{equation} \label{ga1c}
\begin{split} 
\dot{\widetilde{(s_k^Ts_j)}}=\omega_{\theta 3},\\\gamma_{\theta 3}=\frac{\partial{H_{\theta 3}}}{\partial{\widetilde{(s_k^Ts_j)}}}+d_{\theta 3}\omega_{\theta 3},
\end{split}
\end{equation}
where $\omega_{\theta 3}$ denotes the input of the controller, and $c_{\theta 3},d_{\theta 3} \in \mathbb{R}_+$ are constants. The corresponding estimated angle Jacobian and distance estimators are given by
\begin{equation} \label{eL1c}
\begin{split}
\hat{L}_{\theta 3}=s_k^T\frac{1}{\hat{r}_{\theta i}}(I_2-s_i{s_i}^T),
\end{split}
\end{equation}
\begin{equation} \label{er1ic}
\begin{split}
\dot{\hat{r}}_{\theta i}=\dot{r}_{i}+\dot{\bar{r}}_{\theta i}=s_i^T\dot{z_i}-\frac{\gamma_{\theta 3}^T}{c_{\theta i}\hat{r}_{\theta i}}(s_k^TL_{s_j}\dot{q}_3),
\end{split}
\end{equation}
where $c_{\theta i}  \in \mathbb{R}_+$ is a constant. Note that the expressions (\ref{er1ib}) and (\ref{er1ic}) are both distance estimators of the edge $i$, but (\ref{er1ib}) is estimated by the agent $2$, while (\ref{er1ic}) is estimated by the agent $3$.

The power from the port related to agent $3$ is given by
\begin{equation} 
\begin{split}
<\hat{L}_{\phi 3}^T\gamma_{\phi 3}+\hat{L}_{\theta 3}^T\gamma_{\theta 3}|\dot{q}_3>.
\end{split}
\end{equation}
Correspondingly, the controller of the agent $3$ is given by
\begin{equation} \label{ctrc}
\begin{split}
U_{3}=&U_{\theta 3}+U_{\phi 3}\\=&-\overbrace{[\frac{1}{\hat{r}_{\phi i}}(I_2-s_i{s_i}^T)^Ts_j+{\frac{1}{\hat{r}_{\phi j}}}(I_2-s_j{s_j}^T)^Ts_i]}^{\hat{L}_{\phi 3}}\\& \overbrace{[c_{\phi 3}{\widetilde{(s_i^Ts_j)}}+d_{\phi 3}{\dot{\widetilde{(s_i^Ts_j)}}}]}^{\gamma_{\phi 3}}\\&-\underbrace{{\frac{1}{\hat{r}_{\theta i}} (I_2-s_i{s_i}^T)^Ts_k}}_{\hat{L}_{\theta 3}} \underbrace{[c_{\theta 3}{{\widetilde{(s_k^Ts_j)}}}+d_{\theta 3}{\dot{\widetilde{(s_k^Ts_j)}}}]}_{\gamma_{\theta 3}}.
\end{split}
\end{equation}

In general, the control law for three agents is derived as
\begin{equation} \label{ui}
\begin{split}
U_n&=-(\hat{L}_{\theta n}^T\gamma_{\theta n}+\hat{L}_{\phi n}^T\gamma_{\phi n})\\&=-(\hat{L}_{\theta n}^T(c_{\theta n}\widetilde{(s_k^Ts_j)}+d_{\theta n}\dot{\widetilde{(s_k^Ts_j)}})+\\&\quad \hat{L}_{\phi n}^T(c_{\phi n}\widetilde{(s_i^Ts_j)}+d_{\phi n}\dot{\widetilde{(s_i^Ts_j)}})), \quad n \in \{1, 2, 3\} .
\end{split}
\end{equation}

\subsection{Stability Analysis}
To facilitate the analysis, we first give the following lemma.
\begin{lemma} \label{lemma2}
Consider the three agents modeled as (\ref{model}). The desired formation are given by two desired angles $\theta^*$ and $\phi^*$ which are not close to $0$ or $\pi$. If the three agents are moving in the neighborhood of the desired formation shape, the matrix
\begin{equation} \label{Lmatr}
 \hat{L}^T:=\matr{
 \hat{L}^T_{\theta 1} & \hat{L}^T_{\phi 1} \\ \hat{L}^T_{\theta 2} & \hat{L}^T_{\phi 2} \\ 
 \hat{L}^T_{\theta 3} & \hat{L}^T_{\phi 3} }    \in \mathbb{R}^{6 \times 2}
\end{equation}
is of full column rank.
\end{lemma}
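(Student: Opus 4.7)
The plan is to establish full column rank of the $6\times 2$ matrix $\hat{L}^T$ by exhibiting a $2\times 2$ sub-block that is invertible. The cleanest candidate is the block corresponding to agent $2$, namely
\begin{equation*}
M_2 := \begin{pmatrix} \hat{L}^T_{\theta 2} & \hat{L}^T_{\phi 2}\end{pmatrix} = \begin{pmatrix} \dfrac{1}{\hat{r}_{\theta i}}(I_2-s_is_i^T)s_j & \dfrac{1}{\hat{r}_{\phi k}}(I_2-s_ks_k^T)s_j\end{pmatrix},
\end{equation*}
which follows directly from (\ref{eL1b}) and (\ref{eL2b}) by transposing (note that $I_2-ss^T$ is symmetric).

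First I would recall that in $\mathbb{R}^2$ the operator $I_2-ss^T$ is the orthogonal projection onto the line perpendicular to the unit vector $s$. Hence the two columns of $M_2$ are, up to the positive scalars $1/\hat{r}_{\theta i}$ and $1/\hat{r}_{\phi k}$, the projections of $s_j$ onto the directions orthogonal to $s_i$ and $s_k$, respectively. Next I would argue that near a non-degenerate triangular formation the three bearings $s_i,s_j,s_k$ are pairwise non-parallel, so both columns are nonzero, and moreover the first column lies along $s_i^{\perp}$ while the second lies along $s_k^{\perp}$. Linear independence of the two columns therefore reduces to linear independence of $s_i$ and $s_k$, which is equivalent to the interior angle at agent $2$ being neither $0$ nor $\pi$.

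Finally I would invoke the neighborhood hypothesis: since $\theta^*$ and $\phi^*$ are bounded away from $0$ and $\pi$, and since we work in a small enough neighborhood of the desired formation, by continuity the current angles remain non-degenerate and so do the bearings. Combined with the positivity of the distance estimates $\hat{r}_{\theta i}$ and $\hat{r}_{\phi k}$, this gives $\det M_2\neq 0$, so $\hat{L}^T$ has full column rank~$2$.

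The main obstacle I foresee is ensuring non-degeneracy of the angle at agent $2$ itself. The hypothesis only constrains $\theta^*$ and $\phi^*$; if $\theta^*+\phi^*$ were close to $\pi$ then the angle at agent $2$ would be close to $0$ and $M_2$ would become singular. This case must either be ruled out by strengthening the statement so that the third angle of the triangle is also bounded away from $0$ and $\pi$, or handled by switching to the agent~$1$ or agent~$3$ block when the agent~$2$ block degenerates. Verifying that under the stated hypothesis at least one of the three diagonal $2\times 2$ sub-blocks of $\hat{L}^T$ is always non-singular is the main technical content beyond the linear-algebra observation above.
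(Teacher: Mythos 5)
Your overall strategy---exhibit an invertible $2\times 2$ diagonal block of $\hat L^T$---is sound, but you picked the one block for which it fails, and the difficulty you flag at the end is not a hypothetical: it is exactly the degenerate case. Both columns of your $M_2$ lie along the perpendiculars of the two edges incident to agent $2$, i.e.\ along $s_i^{\bot}$ and $s_k^{\bot}$ (this is true whether you read the entries off (\ref{l1})--(\ref{l2}) or off (\ref{eL1b})--(\ref{eL2b})), so $\det M_2\neq 0$ is equivalent to non-degeneracy of the \emph{third} interior angle, the one at agent $2$. That angle equals $\pi-\theta-\phi$ and is not controlled by the hypothesis: with $\theta^*=\phi^*=\pi/2-\varepsilon$ both constrained angles stay far from $0$ and $\pi$ while the angle at agent $2$ tends to $0$ and $M_2$ becomes singular. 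Since you leave the ``switch to another block'' step unexecuted, the argument as written does not establish the lemma. The fix is immediate and needs no strengthening of the statement: use the agent-$3$ (or agent-$1$) block instead. Writing $\hat L_{\theta 3}^T=a\,s_j^{\bot}$ and $\hat L_{\phi 3}^T=b_1 s_i^{\bot}+b_2 s_j^{\bot}$ with $a,b_1,b_2>0$, a vanishing combination $\lambda a s_j^{\bot}+\mu(b_1 s_i^{\bot}+b_2 s_j^{\bot})=0$ forces $\mu b_1=0$, hence $\mu=0$ and then $\lambda=0$, because $s_i^{\bot}$ and $s_j^{\bot}$ are independent precisely when $\phi$ is bounded away from $0$ and $\pi$; the agent-$1$ block works identically using $\theta$.

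For comparison, the paper does not isolate a single block: it rewrites every entry of $\hat L^T$ as a positive scalar times one of $s_i^{\bot},s_j^{\bot},s_k^{\bot}$ and concludes independence of the two $6$-dimensional columns from the pairwise independence of these perpendiculars together with positivity of the scalars. That argument is in substance the block argument done on a row whose two entries involve perpendiculars of edges meeting at a \emph{constrained} angle; its passing claim that ``any two of $s_i,s_j,s_k$ are linearly independent'' is open to the same objection you raised, but the conclusion survives because the decisive rows never pair $s_i^{\bot}$ with $s_k^{\bot}$ alone. So your instinct about the weak point is correct; you simply need to carry out the switch to a block governed by $\theta$ or $\phi$ rather than by the third angle.
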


\begin{proof}
According to the definitions in (\ref{eL1a}), (\ref{eL2a}), (\ref{eL1b}), (\ref{eL2b}), (\ref{eL1c}) and (\ref{eL2c}), we have 
\begin{equation}
 \hat{L}^T=\matr{
 -(\hat{L}^T_{s_k}s_j+\hat{L}^T_{s_j}s_k) & -\hat{L}^T_{s_i}s_j \\ \hat{L}^T_{s_k}s_j & -\hat{L}^T_{s_j}s_i \\ 
 \hat{L}^T_{s_j}s_k & \hat{L}^T_{s_i}s_j+\hat{L}^T_{s_j}s_i
 }.    
\end{equation}
Note that all the entries in $\hat{L}^T$ are column vectors. We first consider the term ${L}^T_{s_k}s_j$ without introducing its estimator. ${L}^T_{s_k}$ is the orthogonal projection on $s_k$. A vector multiplied by ${L}^T_{s_k}$ makes the resulting vector parallel to ${s_k}^{\bot}$. Furthermore, the distance discrepancy caused by introducing the estimator only changes the magnitude of the vector. Therefore, the term $\hat{L}^T_{s_k}s_j$ can be rewritten as $a_1s_k^{\bot}$ with an unspecified magnitude parameter $a_1 \in \mathbb{R}_+$. Note that the value of $a_1$ does not affect the rank of $\hat{L}^T$. Using this property for each entry, the matrix can be rewritten as
\begin{equation}
\begin{split}
 \hat{L}^T=\matr{
 -(a_1s_k^{\bot}+a_2s_j^{\bot}) & -b_1s_i^{\bot} \\ a_1s_k^{\bot} & -b_2s_j^{\bot} \\ 
 a_2s_j^{\bot} & b_1s_i^{\bot}+b_2s_j^{\bot}
 },    
\end{split}
\end{equation}
where $a_1, a_2, b_1, b_2 \in \mathbb{R}_+$. Note that the values are determined by the current positions of three agents. Since $\theta^*$ and $\phi^*$ are not close to $0$ or $\pi$ in the neighborhood of the desired formation, any two of $s_i, s_j, s_k$ are linearly independent. Therefore, any two of $s_i^{\bot}, s_j^{\bot}, s_k^{\bot}$ are linearly independent. Furthermore, since $a_1, a_2, b_1, b_2$ are always positive, the two columns of $\hat{L}^T$ are always linearly independent, therefore, it is of full column rank. 
\end{proof}

The main result of this paper is given by the following theorem.
\begin{theorem} \label{thm1}
Consider the three agents modeled as in (\ref{model}). Assume that the initial positions of three agents are given by that the angles to be controlled are in the neighborhood of the desired formation. Using the control law (\ref{ctra}) for agent 1, the control law (\ref{ctrb}) for agent 2, and the control law (\ref{ctrc}) for agent 3, the three agents converge to the formation constrained by the desired angles.    
\end{theorem}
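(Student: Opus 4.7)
The plan is to view the closed-loop system as a port-Hamiltonian system with an augmented state and to apply a standard energy plus LaSalle argument. First I would assemble the total Hamiltonian
\[
\mathcal{H} := \sum_{n=1}^{3} H_n(p_n) \;+\; \sum_{n=1}^{3}\bigl(H_{\theta n}+H_{\phi n}\bigr) \;+\; H_{\theta k}+H_{\theta j}+H_{\theta i}+H_{\phi k}+H_{\phi j}+H_{\phi i},
\]
consisting of the kinetic energies of the agents, the potential energies stored in the virtual angle springs, and the fictitious potential energies assigned to the distance-estimation errors in \eqref{Hbeta}. This $\mathcal{H}$ is a valid Lyapunov candidate in a neighborhood of the desired formation, since each summand is non-negative and vanishes exactly at the target manifold on which the angle errors and the agent velocities are zero.

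Next I would establish the power balance $\dot{\mathcal{H}}\le 0$. The key observation, which the paper motivates via ``energy is coordinate-free'', is that the estimator dynamics \eqref{br1k}--\eqref{er1j} (and their analogues for agents $2$ and $3$) are designed precisely so that the power discrepancy $\beta_{\theta k},\beta_{\theta j},\dots$ between the true angle-space port $\langle\gamma\mid\dot{\tilde y}\rangle$ and the implemented $\mathbb{R}^{2}$-port $\langle\hat L^T\gamma\mid\dot q\rangle$ is matched by the rate of change of the estimator Hamiltonians $H_{\theta k},H_{\theta j},H_{\phi k},\dots$. After cancellation of these internal power flows, what remains on the right-hand side of $\dot{\mathcal{H}}$ is only the dissipation of the agent dampers $D_n^r$ and of the virtual dampers, giving
\[
\dot{\mathcal{H}} \;=\; -\sum_{n=1}^{3} \dot q_n^{T} D_n^{r}\dot q_n \;-\; \sum_{n=1}^{3}\bigl(d_{\theta n}\,\omega_{\theta n}^{2}+d_{\phi n}\,\omega_{\phi n}^{2}\bigr) \;\le\; 0.
\]

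Then I would invoke LaSalle's invariance principle on a sublevel set of $\mathcal{H}$ contained in the neighborhood in which Lemma \ref{lemma2} applies. On the largest invariant set $\mathcal{M}\subset\{\dot{\mathcal{H}}=0\}$ the damping forces $\dot q_n=0$, hence $p_n=0$ and therefore $\omega_{\theta n}=\omega_{\phi n}=0$ automatically. Invariance further requires $\dot p_n=0$, so $U_n=0$ for $n=1,2,3$. Stacking these three two-dimensional equations yields
\[
\hat L^{T}\begin{pmatrix} c_{\theta n}\,\widetilde{(s_k^Ts_j)}\\[2pt] c_{\phi n}\,\widetilde{(s_i^Ts_j)}\end{pmatrix}_{n=1,2,3}=0,
\]
and since the positive constants $c_{\theta n},c_{\phi n}$ only rescale the rows, the full-column-rank conclusion of Lemma \ref{lemma2} forces $\widetilde{(s_k^Ts_j)}=\widetilde{(s_i^Ts_j)}=0$. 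Hence $\mathcal{M}$ is contained in the target manifold and the three agents converge to the desired angle formation.

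The main obstacle will be the passivity bookkeeping in the second step: one must check carefully that the estimator definitions in \eqref{br1k}--\eqref{er1j} and their counterparts for agents $2$ and $3$ \emph{exactly} absorb, port-by-port, the discrepancies \eqref{beta} introduced by replacing the true Jacobian $L$ by the estimated Jacobian $\hat L$, so that no spurious power is injected and the residual $\dot{\mathcal{H}}$ is purely dissipative. A secondary caveat is that the argument is inherently local: the rank hypothesis of Lemma \ref{lemma2} only holds in a neighborhood of the desired shape (and the estimators $\hat r_{\bullet\bullet}$ must remain positive along trajectories), which is precisely why the theorem is stated under the neighborhood assumption on initial conditions.
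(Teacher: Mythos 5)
Your proposal follows essentially the same route as the paper's proof: the same augmented Hamiltonian (kinetic energy plus angle-spring potentials plus distance-error potentials from \eqref{Hbeta}), the same ``energy is coordinate-free'' power-balance cancellation yielding a purely dissipative $\dot{\mathcal{H}}$, and the same LaSalle argument combined with the full-column-rank conclusion of Lemma~\ref{lemma2} to force $\widetilde{(s_k^Ts_j)}=\widetilde{(s_i^Ts_j)}=0$ on the invariant set. The only bookkeeping detail to correct is that the estimator potentials must be counted per estimating agent: the paper's $H^t$ in \eqref{H} carries separate terms $\bar{r}_{\theta i2},\bar{r}_{\theta i3}$ and $\bar{r}_{\phi k1},\bar{r}_{\phi k2}$ because edges $i$ and $k$ are each independently estimated by two different agents, giving eight distance-error terms rather than the six in your sum.
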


\begin{proof}
Take the following Hamiltonian as a candidate Lyapunov function 
\begin{equation} \label{H}
\begin{split}
H^t=&\frac{1}{2}(\frac{1}{m_1}p_1^Tp_1+\frac{1}{m_2}p_2^Tp_2+\frac{1}{m_3}p_3^Tp_3)\\&\frac{1}{2}(c_{\theta 1}+c_{\theta 2}+c_{\theta 3})\widetilde{(s_k^Ts_j)}^2+\\&\frac{1}{2}(c_{\phi 1}+c_{\phi 2}+c_{\phi 3})\widetilde{(s_i^Ts_j)}^2+\\&\frac{1}{2}c_{\theta i}\bar{r}_{\theta i2}^2+\frac{1}{2}c_{\theta i}\bar{r}_{\theta i3}^2+\frac{1}{2}c_{\theta j}\bar{r}_{\theta j}^2+\frac{1}{2}c_{\theta k}\bar{r}_{\theta k}^2+\\&\frac{1}{2}c_{\phi k}\bar{r}_{\phi k1}^2+\frac{1}{2}c_{2k}\bar{r}_{\phi k2}^2+\frac{1}{2}c_{\phi i}\bar{r}_{\phi i}^2+\frac{1}{2}c_{\phi j}\bar{r}_{\phi j}^2.
\end{split}
\end{equation}

It follows that $H^t$ is positive definite. Now we consider the time derivative of (\ref{H})
\begin{equation} \label{dH}
\begin{split}
\dot{H}^t=&m_1\dot{q}_1^T\ddot{q}_1+m_2\dot{q}_2^T\ddot{q}_2+m_3\dot{q}_3^T\ddot{q}_3\\&(c_{\theta 1}+c_{\theta 2}+c_{\theta 3})[\widetilde{(s_k^Ts_j)}(\hat{L}_{\theta 1}\dot{q}_1+\hat{L}_{\theta 2}\dot{q}_2+\hat{L}_{\theta 3}\dot{q}_3)]+\\&(c_{\phi 1}+c_{\phi 2}+c_{\phi 3})[\widetilde{(s_i^Ts_j)}(\hat{L}_{\phi 1}\dot{q}_1+\hat{L}_{\phi 2}\dot{q}_2+\hat{L}_{\phi 3}\dot{q}_3)]+\\&c_{\theta i}\bar{r}_{\theta i2}\dot{\bar{r}}_{\theta i}+c_{\theta i}\bar{r}_{\theta i3}\dot{\bar{r}}_{\theta i}+c_{\theta j}\bar{r}_{\theta j}\dot{\bar{r}}_{\theta j}+c_{\theta k}\bar{r}_{\theta k}\dot{\bar{r}}_{\theta k}+\\&c_{\phi k}\bar{r}_{\phi k1}\dot{\bar{r}}_{\phi k1}+c_{\phi k}\bar{r}_{\phi k2}\dot{\bar{r}}_{\phi k2}+c_{\phi i}\bar{r}_{\phi i}\dot{\bar{r}}_{\phi i}+c_{\phi j}\bar{r}_{\phi j}\dot{\bar{r}}_{\phi j}.
\end{split}
\end{equation}

Note that 
\begin{equation} \label{ddq}
\begin{split}
\dot{p}_n=-D_n^r\frac{p_n}{m_n}+U_n, \quad n \in \{1,2,3\}
\end{split}
\end{equation}
 
Substituting (\ref{ddq}) into the first line of (\ref{dH}), substituting (\ref{eL1a}), (\ref{eL1b}), (\ref{eL1c}) into the second line of (\ref{dH}), substituting (\ref{eL2a}), (\ref{eL2b}), (\ref{eL2c}) into the third line of (\ref{dH}), substituting (\ref{er1ib}), (\ref{er1ic}), (\ref{er1j}) and (\ref{er1k}) into the fourth line of (\ref{dH}), and substituting (\ref{er2kb}), (\ref{er2ka}), (\ref{er2i}) and (\ref{er2j}) into the fifth line of (\ref{dH}), we can simplify (\ref{dH}). For simplicity, we omit the process and only give the result as
\begin{equation} \label{dHf}
\begin{split}
\dot{H}^t=&-(D_1^r\dot q_1^T\dot q_1+D_2^r\dot q_2^T\dot q_2+D_3^r\dot q_3^T\dot q_3)\\&-(d_{\theta 1}+d_{\theta 2}+d_{\theta 3})\dot{\widetilde{(s_k^Ts_j)}}^2\\&-(d_{\phi 1}+d_{\phi 2}+d_{\phi 3})\dot{\widetilde{(s_i^Ts_j)}}^2\leq0.
\end{split}
\end{equation}

By invoking LaSalle's invariance principle, the trajectories of the closed-loop system converge to the largest invariant set where $\dot{\bar{H}}^t=0$. On this set $\dot{q}_1, \dot{q}_2, \dot{q}_3=0$, $\dot{\widetilde{(s_k^Ts_j)}}=0$ and $\dot{\widetilde{(s_i^Ts_j)}}=0$. Hence, $\ddot{q}_1, \ddot{q}_2, \ddot{q}_3=0$. Furthermore, according to (\ref{ddq}) it follows that 
\begin{equation} \label{Ueq0}
U_1, U_2, U_3=0
\end{equation}
on this invariant set.

Next we prove $\widetilde{(s_k^Ts_j)}=0$ and $\widetilde{(s_i^Ts_j)}=0$ on this invariant set. Substituting $\dot{\widetilde{(s_k^Ts_j)}}=0$, $\dot{\widetilde{(s_i^Ts_j)}}=0$, and (\ref{Ueq0}) into (\ref{ui}) for $n=1,2,3$ respectively, we have
\begin{equation} \label{sui}
\begin{split}
0=c_{\theta 1}\hat{L}_{\theta 1}^T\widetilde{(s_k^Ts_j)}+c_{\phi 1}\hat{L}_{\phi 1}^T\widetilde{(s_i^Ts_j)},\\0=c_{\theta 2}\hat{L}_{\theta 2}^T\widetilde{(s_k^Ts_j)}+c_{\phi 2}\hat{L}_{\phi 2}^T\widetilde{(s_i^Ts_j)},\\0=c_{\theta 3}\hat{L}_{\theta 3}^T\widetilde{(s_k^Ts_j)}+c_{\phi 3}\hat{L}_{\phi 3}^T\widetilde{(s_i^Ts_j)}.
\end{split}
\end{equation}
Note that $c_{\theta 1}, c_{\theta 2}, c_{\theta 3}, c_{\phi 1}, c_{\phi 2}, c_{\phi 3}$ are positive spring constants. Combined with the analysis in Lemma 2, the matrix $$\matr{
 c_{\theta 1}\hat{L}^T_{\theta 1} &  c_{\phi 1}\hat{L}^T_{\phi 1} \\ c_{\theta 2}\hat{L}^T_{\theta 2} & c_{\phi 2}\hat{L}^T_{\phi 2} \\ c_{\theta 3}\hat{L}^T_{\theta 3} & c_{\phi 3}\hat{L}^T_{\phi 3} }$$ is also of full column rank, thus we conclude that $\widetilde{(s_k^Ts_j)}=\widetilde{(s_i^Ts_j)}=0$, which means that the three agents achieve the desired formation, thus completing the proof. 
\end{proof}


\section{Extension to triangulated Laman graphs}

The topology of a group of agents can be designed according to the angle constraints. In this work we assume that the topology is designed as a triangulated Laman graph $\mathcal{G}_n(\mathcal{V}_n,\mathcal{E})$ with $M$ triangles as shown in Fig. 2. In each triangle, the form of the control law is designed as in Section 3. The corresponding control law for each agent depends on how many triangles the agent belongs to. The total control law is the sum of all control laws derived from all related triangles. We directly give the control law for the whole group as  
\begin{equation} \label{um}
\begin{split}
U_n &= \sum_{l \in \mathcal{N}_n}(U_{\theta_l n}+U_{\phi_l n})
\\&= \sum_{l \in \mathcal{N}_n}(\hat{L}_{\theta_l n}(c_{\theta_l n}\widetilde{\cos{\theta_l}}+d_{\theta_l n}\dot{\widetilde{\cos{\theta_l}}})
\\& \quad +\hat{L}_{\phi_l n}(c_{\phi_l n}\widetilde{\cos{\phi_l}}+d_{\phi_l n}\dot{\widetilde{\cos{\phi_l}}})), \quad n \in \{1,2,...,N\},
\end{split}
\end{equation}
where $l \in \{1,2,..., M\}$ represents the $l$th triangle and the set $\mathcal{N}_n$ contains all the triangles that agent $n$ forms. $\hat{L}_{\theta_l n}$ and $\hat{L}_{\phi_l n}$ are the estimated angle Jacobian mapping from angles $\theta_l$ and $\phi_l$ to agent $n$, respectively. The estimators in $\hat{L}_{\theta_l n}$ and $\hat{L}_{\phi_l n}$ are derived by the same approach as in Section 3.1-3.3 for each triangle. We omit the expressions due to their tedious and numerous variables and terms.

\begin{lemma} \label{lemma3}
Consider a triangulated Laman graph $\mathcal{G}_n(\mathcal{V}_n,\mathcal{E})$ with $M$ triangles, and all the agents are modeled as in (\ref{model}). The desired angle-based formation is given by the desired angles $\theta^*_l,\phi^*_l$ which are not close to $0$ or $\pi$. If the initial positions of all agents are in the neighborhood of the desired formation, the matrices
\begin{equation}
 \hat{L}_l^T=\matr{ 
 \hat{L}_{\theta_l}^T &  \hat{L}_{\phi_l}^T 
 } \in \mathbb{R}^{6 \times 2}, \quad  l\in \{1,2,...,M\}
\end{equation}
are of full column rank, where $\hat{L}_l^T$ is the Jacobian matrix mapping from the two angles of triangle $l$ to the three agents forming the triangle. 
\end{lemma}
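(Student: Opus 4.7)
The plan is to reduce Lemma \ref{lemma3} directly to Lemma \ref{lemma2} by exploiting the fact that the claim is local to each triangle of the triangulated Laman graph. For a fixed triangle $l$, relabel its three vertices as $n_1, n_2, n_3$ and its three edges as $i, j, k$ with the two marked angles $\theta_l, \phi_l$. The entries of $\hat{L}_l^T$ are, by construction, built only from the bearings $s_i, s_j, s_k$ of the edges of triangle $l$ and from the estimated distances $\hat{r}_{\theta_l \cdot}, \hat{r}_{\phi_l \cdot}$ associated with those edges, exactly as in equations \eqref{eL1a}, \eqref{eL2a}, \eqref{eL1b}, \eqref{eL2b}, \eqref{eL1c}, \eqref{eL2c}.

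The first step is therefore to write down $\hat{L}_l^T$ explicitly and observe that it has precisely the block structure
\begin{equation*}
\hat{L}_l^T = \matr{
 -(\hat{L}^T_{s_k}s_j+\hat{L}^T_{s_j}s_k) & -\hat{L}^T_{s_i}s_j \\
 \hat{L}^T_{s_k}s_j & -\hat{L}^T_{s_j}s_i \\
 \hat{L}^T_{s_j}s_k & \hat{L}^T_{s_i}s_j+\hat{L}^T_{s_j}s_i
 }
\end{equation*}
appearing in the proof of Lemma \ref{lemma2}, with the indices now referring to the edges of triangle $l$. No new algebra is required: the derivation of each block is a verbatim copy of the three-agent case carried out in Sections 3.1--3.3.

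Next I would invoke the hypotheses that (i) $\theta_l^*, \phi_l^*$ are bounded away from $0$ and $\pi$, and (ii) the current configuration lies in a neighborhood of the desired formation. These together guarantee that any two of $s_i, s_j, s_k$ are linearly independent at the current configuration, so that any two of the perpendicular directions $s_i^\perp, s_j^\perp, s_k^\perp$ are linearly independent. Since the estimated distances appear only as strictly positive denominators (they are positive at initialization and remain so in a neighborhood of the formation), the magnitude coefficients $a_1, a_2, b_1, b_2$ in the decomposition used in Lemma \ref{lemma2} are strictly positive. The linear-independence argument of Lemma \ref{lemma2} then yields that the two columns of $\hat{L}_l^T$ are linearly independent, i.e.\ $\hat{L}_l^T$ has full column rank. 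Finally I would note that the argument is the same for every $l \in \{1,\dots,M\}$, since the triangulated Laman property ensures that each triangle can be treated as an independent three-agent subsystem for the purposes of this Jacobian.

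The only real obstacle is bookkeeping rather than a substantive new idea: one must be careful that the relabeling of the three agents and three edges of triangle $l$ produces exactly the sign/pattern of entries used in Lemma \ref{lemma2}, and that the estimator denominators $\hat{r}$ stay strictly positive throughout the neighborhood, so that the positive scalars $a_1,a_2,b_1,b_2$ never degenerate. Both points are immediate from the construction of the estimators in Section 3 and the local-neighborhood hypothesis, so no new machinery beyond Lemma \ref{lemma2} is needed.
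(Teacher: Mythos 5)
Your proposal is correct and follows essentially the same route as the paper: the paper's proof also reduces each triangle $l$ to the three-agent case by observing that $\hat{L}_l^T$ has the same structure as the matrix in Lemma~\ref{lemma2} and then invoking that lemma triangle by triangle. Your added care about relabeling and the strict positivity of the estimator denominators only makes explicit what the paper leaves implicit, so no substantive difference remains.
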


\begin{proof}
$\hat{L}_l^T$ takes the same form as in (\ref{Lmatr}) . The first column $\hat{L}_{\theta l}^T$ is the Jacobian mapping from $\theta_l$ to the three agents forming the triangle $l$ and the second column $\hat{L}_{\phi l}^T$ is the Jacobian mapping from $\phi_l$ to the three agents. Note that all the desired angles $\theta^*_l,\phi^*_l, l \in \{1,2,...,M\}$ are not close to $0$ or $\pi$. Considering the $M$ triangles respectively and applying \textit{Lemma \ref{lemma3}}, we conclude that $\hat{L}_l$ is of full column rank for all $l \in \{1,2,...,M\}$ .
\end{proof}

\begin{theorem} \label{thm}
Consider a triangulated Laman graphs $\mathcal{G}_n(\mathcal{V}_n,\mathcal{E})$ with $M$ triangles. The angles to be controlled are defined by choosing any two angles of each triangle whose values are away enough from $0$ or $\pi$. Moreover, assume that the initial positions of all agents are given such that the angles to be controlled are in the neighborhood of the desired formation. Under the control law given by (\ref{um}), the group of agents locally achieves the desired angle-based formation.    
\end{theorem}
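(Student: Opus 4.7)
The plan is to extend the Lyapunov--LaSalle argument of Theorem \ref{thm1} from a single triangle to the full network of $M$ triangles.

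First, I aggregate the per-triangle Hamiltonians used in (\ref{H}) into a total storage
\begin{equation*}
H^t = \sum_{n=1}^N \frac{1}{2m_n} p_n^T p_n + \sum_{l=1}^M V_l,
\end{equation*}
where each $V_l$ collects the two angle-error potentials of triangle $l$ together with all of the distance-estimator potentials introduced by its three per-triangle controllers. Since different triangles act on disjoint error and estimator coordinates, $H^t$ is positive definite in the combined (momentum, angle-error, estimator-error) state and hence a valid Lyapunov candidate.

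Next, differentiating $H^t$ along closed-loop trajectories, I use the fact that the total control (\ref{um}) decomposes as a sum over $\mathcal{N}_n$ of the same per-triangle controllers analyzed in Theorem \ref{thm1}, and that each estimator is tied to a single angle of a single triangle. Hence the cross-term cancellations that produced (\ref{dHf}) in the single-triangle case happen locally inside every triangle, and simply accumulate over $l = 1,\dots,M$:
\begin{equation*}
\dot H^t = -\sum_{n=1}^N \dot q_n^T D_n^r \dot q_n - \sum_{l=1}^M \Big(\bar d_{\theta_l}\,\dot{\widetilde{\cos\theta_l}}^{\,2} + \bar d_{\phi_l}\,\dot{\widetilde{\cos\phi_l}}^{\,2}\Big) \le 0,
\end{equation*}
where $\bar d_{\theta_l},\bar d_{\phi_l}$ collect the strictly positive damping gains at triangle $l$. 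LaSalle's invariance principle then shows that every trajectory approaches the largest invariant set, on which $\dot q_n = 0$ for all $n$, $\dot{\widetilde{\cos\theta_l}} = \dot{\widetilde{\cos\phi_l}} = 0$ for all $l$, and consequently $U_n = 0$ for every agent by (\ref{ddq}).

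The crux is to show that on this invariant set, $\widetilde{\cos\theta_l} = \widetilde{\cos\phi_l} = 0$ for every $l$. I would argue by induction on the Laman construction order. The last node added to the graph, call it $n^\star$, belongs to exactly one triangle, say triangle $M$, so $\mathcal{N}_{n^\star} = \{M\}$ and its invariant-set equation collapses to
\begin{equation*}
0 = c_{\theta_M n^\star}\hat L_{\theta_M n^\star}^T\,\widetilde{\cos\theta_M} + c_{\phi_M n^\star}\hat L_{\phi_M n^\star}^T\,\widetilde{\cos\phi_M}.
\end{equation*}
The structural form of the Jacobian blocks exposed in the proof of Lemma \ref{lemma2} shows that, for the newly-added vertex of a triangle, the two columns of the $2\times 2$ coefficient matrix above are linearly independent whenever the triangle's angles remain away from $0$ and $\pi$, essentially because any two of $s_i^\perp, s_j^\perp, s_k^\perp$ are linearly independent under non-degeneracy. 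This forces $\widetilde{\cos\theta_M} = \widetilde{\cos\phi_M} = 0$. Once those errors vanish, the triangle-$M$ terms drop out of every other agent's $U_n = 0$ relation, and the problem reduces to the sub-Laman-graph obtained by deleting $n^\star$. Iterating this peeling over the remaining $M-1$ triangles closes the induction.

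The main obstacle will be precisely this last step: an agent that sits in many triangles has a $U_n = 0$ relation mixing the errors of all those triangles, so a naive per-agent rank argument cannot succeed. The Laman construction order provides a natural elimination sequence, and the full-column-rank property codified in Lemma \ref{lemma3} - applied to the newly-added vertex of each triangle in turn - is precisely what makes each peeling step go through.
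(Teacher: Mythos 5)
Your Lyapunov construction, the dissipation computation, and the LaSalle reduction coincide with the paper's proof: the paper uses exactly the aggregated storage (\ref{gH}), obtains (\ref{dgH}), and concludes that on the invariant set $\dot q_n=0$, $\dot{\widetilde{\cos\theta_l}}=\dot{\widetilde{\cos\phi_l}}=0$ and $U_n=0$. You part ways at the final step. The paper passes directly from $U_n=0$ to the per-triangle relations (\ref{Lsui}) and invokes Lemma \ref{lemma3}; you correctly point out that this passage is not immediate, since for an agent belonging to several triangles $U_n=0$ only forces a \emph{sum} of per-triangle contributions to vanish, and you propose to decouple them by peeling vertices in reverse Laman construction order. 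That is a sharper reading of the difficulty than the paper's own argument, which silently assumes the decoupling.

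The weak link is the invertibility claim on which each peeling step rests. You need the $2\times 2$ matrix $\bigl(\hat L_{\theta_M n^\star}^T\ \ \hat L_{\phi_M n^\star}^T\bigr)$ at the newly added vertex to be nonsingular, and you attribute this to Lemma \ref{lemma3}. But Lemma \ref{lemma3} gives full \emph{column} rank of the stacked $6\times 2$ matrix, a property that does not localize to any single $2\times 2$ vertex block (a $6\times 2$ matrix can have rank $2$ while every one of its three $2\times 2$ blocks is singular). Writing the blocks in the basis $s_i^{\bot},s_j^{\bot},s_k^{\bot}$ as in the proof of Lemma \ref{lemma2}, the block at the vertex where \emph{neither} controlled angle sits has the form $(\gamma s^{\bot},\ \delta s'^{\bot})$ with $s,s'$ the two edges incident to that vertex, and it is singular precisely when the third, \emph{uncontrolled} angle of the triangle degenerates to $0$ or $\pi$. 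The hypotheses only keep the two controlled angles away from $0$ and $\pi$ (e.g., $\theta^*\approx\phi^*\approx\pi/2$ leaves the third angle near $0$), and "any two angles of each triangle" allows the newly added vertex to be exactly this degenerate one. To close the argument you must either additionally bound the third angle of every triangle away from $0$ and $\pi$ (or appeal to strong nondegeneracy of the framework), restrict which two angles are controlled relative to the construction order, or combine $n^\star$'s two scalar equations with the residual equations of the two older vertices of the same triangle. With that repair your induction goes through and is, in fact, more rigorous than the published proof at the same spot.
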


\begin{proof}
To analyze the stability, the general Hamiltonian is given as
\begin{equation} \label{gH}
\begin{split}
H^L=&\frac{1}{2}\sum_{n=1}^N \frac{1}{m_n}p_n^Tp_n
+ \frac{1}{2}\sum_{l=1}^M\sum_{n \in \mathcal{N}_l}(c_{\theta_l n}\widetilde{\cos{\theta_l}}^2  +c_{\phi_l n}\widetilde{\cos{\phi_l}}^2) \\&+\frac{1}{2}\sum_{\epsilon \in \mathcal{E}}c_{\epsilon}\bar{r}_{\epsilon}^2,
\end{split}
\end{equation}
where the set $\mathcal{N}_l$ contains the three agents that form the $l$-th triangle. Note that according to the Hamiltonian (\ref{H}), $c_{\theta_l n}, c_{\phi_l n} \in \mathbb{R}_+$ are respectively the sum of three elements which are related to the agents of the corresponding triangle $l$, where $l \in \{1,2,...,M\}$. $c_{\epsilon} \in \mathbb{R}_+$ is the sum of parameters related to all estimators of edge $\epsilon$ by the corresponding agent.  

It follows that $H^L$ is positive definite. Now we consider the time derivative of (\ref{gH}) using the same procedure as in Section 3.4, leading to
\begin{equation} \label{dgH}
\begin{split}
\dot{H}^L=&-\sum_{n=1}^N D^r_n\dot q_n^T\dot q_n -\sum_{l=1}^{M}\sum_{n \in \mathcal{N}_l}(d_{\theta_l n}\dot{\widetilde{\cos{\theta_l}}}^2+d_{\phi_l n}\dot{\widetilde{\cos{\phi_l}}}^2)\leq0.
\end{split}
\end{equation}
By invoking LaSalle's Invariance principle, we get that the trajectories of the closed-loop system converge to the largest invariant set where $\dot{H}^L=0$. On this set $\dot{q}_n=0$ for all $n \in \{1, 2, ... ,N\}$ and $\dot{\widetilde{\cos{\theta_l}}}=0$ and $\dot{\widetilde{\cos{\phi_l}}}=0$ for all $l \in \{1,2,...,M\}$.

Furthermore, we can conclude that $\ddot{q}_n=0$ for all $n \in \{1, 2, ... ,N\}$. According to the control law, it follows that 
\begin{equation} \label{Uneq0}
U_n=0, \quad n \in \{1, 2, ... ,N\}
\end{equation}
on this invariant set. Substituting $\dot{\widetilde{\cos{\theta_l}}}=0$ and $\dot{\widetilde{\cos{\phi_l}}}=0$ for all $l \in \{1,2,...,M\}$ into (\ref{Uneq0}), we get
\begin{equation} \label{Lsui}
\begin{split}
0= \matr{c_{\theta_l}\hat{L}_{\theta_l}^T & c_{\phi l}\hat{L}_{\phi_l}^T}\matr{\widetilde{\cos{\phi_l}} \\ \widetilde{\cos{\phi_l}}}, \quad l \in \{1,2,...,M\}.
\end{split}
\end{equation}
Similarly to the form in (\ref{sui}), $\matr{c_{\theta_l}\hat{L}_{\theta_l}^T & c_{\phi l}\hat{L}_{\phi_l}^T}$ is a $6 \times 2$ matrix. According to $\textit{Lemma 3}$, $\hat{L}_l^T$ is of full column rank for all $l \in \{1,2,...,M\}$. Note that $c_{\theta_l}, c_{\phi_l}, l \in \{1,2,...,M\}$ are positive scalars, and similarly to the proof of theorem 1, the matrix $\matr{c_{\theta_l}\hat{L}_{\theta_l}^T & c_{\phi l}\hat{L}_{\phi_l}^T}$ is also of full column rank. Therefore, we conclude that $\widetilde{\cos{\theta_l}}$ and $\widetilde{\cos{\phi_l}}$ are zero. It follows that $\theta_l=\theta_l^*$ and $\phi_l=\phi_l^*$ for all $l \in \{1,2,...,M\}$, thus completing the proof.
\end{proof}

\section{Formation maneuvers}
%
%

In order to complete a task, it is necessary to maneuver the whole formation. With this aim, we design controllers for scale and orientation control and velocity tracking.  

For scale and orientation control, without loss of generality, we choose the edge $k$ connecting reference agents $1, 2$ as the reference edge. Assume the coordinates of agents $1, 2$ are aligned and the desired displacement with pre-specified scale and orientation can be described as $z_k^*=(q_1-q_2)^*$. Then, the corresponding Hamiltonian is given as
\begin{equation} \label{Hs}
\begin{split}
H_{k}^z=\frac{1}{2}(z_k-z_k^*)^T(z_k-z_k^*).
\end{split}
\end{equation}

The dynamics of the controller are designed by assigning virtual couplings along the edge $k$. The resulting control law with spring and damping terms associated with edge $k$ is given as
\begin{equation} \label{csk}
\begin{split}
U_k^z=-\frac{\partial{H_k}}{\partial{z_k}}-D_z\dot{z}_k,
\end{split}
\end{equation}
where $D_z \in \mathbb{R}_+^{2 \times 2}$ is a constant damping matrix. Furthermore, using the incidence matrix to connect the edge and the agents, we have
\begin{equation} \label{cs}
\begin{split}
U_1^z=\frac{\partial{H_k}}{\partial{z_k}}+D_z\dot{z}_k,  \\
U_2^z=-\frac{\partial{H_k}}{\partial{z_k}}-D_z\dot{z}_k.
\end{split}
\end{equation}

\begin{remark}
The specification of any of the displacements associated with the edges is sufficient to define the scale and orientation of the network with the underlying triangulated Laman graph. Regarding the scale, considering the triangle that the edge with specified displacement forms, since three inner angles and one distance is known in this triangle, the distances of the other two edges are also uniquely determined. Moreover, the three edges also form other triangles whose inner angles are uniquely determined due to the angle-based formation stabilization. In a similar fashion, the distances of all edges are uniquely determined, i.e., the formation is angle and distance rigid. Concerning the orientation, since the formation is angle and distance rigid, one bearing is sufficient to determine the orientation of the formation.   
\end{remark} 

For velocity tracking, we assume all agents know the desired velocity $v^*$. The corresponding Hamiltonian for each agent is given as 
\begin{align*}
\begin{split}
H_{n}^{v}=\frac{1}{2m_n}(p_n-p_n^*)^T(p_n-p_n^*),
\end{split}
\end{align*} 
where $p_n^*$ is the desired momentum. Similarly, the control law with spring and damping terms can be given as
\begin{equation} \label{uv}
\begin{split}
U_{n}^v=-D_n^rv^*-D_n^v(v_n-v^*),
\end{split}
\end{equation} 
where $D_n^v \in \mathbb{R}_+^{2 \times 2} > 0$ is the damping constant matrix.

\begin{remark}
By using velocities to estimate distances, our framework permits to inject damping into the formation maneuvers. As shown in (\ref{csk}), the term $D_z\dot{z}_k$ improves the transient performance on converging to desired scale and orientation. As for (\ref{uv}), the term $D_n^rv^*$ is used to determine the equilibrium of velocity, and the term $D_n^v(v_n-v^*)$ is used to inject damping for velocity tracking. Although assuming there is friction, $D_n^rv^*$ combined with $D_n^rv_n$ induced by friction can be seen as damping, $D_n^r$ is determined by friction and not tuneable.    
\end{remark}

To sum up, the result of formation maneuvers is given by the following theorem.
\begin{theorem} \label{thm3}
Consider a group of agents modeled as in (\ref{model}) with an underlying triangulated Laman graphs $\mathcal{G}_n(\mathcal{V}_n,\mathcal{E})$ with $M$ triangles. The angles to be controlled are defined by choosing any two angles of each triangle whose values are away enough from $0$ or $\pi$. The desired angles are achieved by the control law (\ref{um}) proposed in Theorem 2, while the desired scale and orientation are achieved by the control law (\ref{cs}). The velocity tracking is achieved by the control law (\ref{uv}).
\end{theorem}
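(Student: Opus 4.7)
The plan is to combine the storage functions used separately in Theorem 2 for angle stabilization and in Section 5 for scale/orientation and velocity tracking into a single Lyapunov candidate, and then apply LaSalle's invariance principle to the closed-loop pH system driven by the superposition $U_n = \sum_{l\in\mathcal{N}_n}(U_{\theta_l n}+U_{\phi_l n}) + U_n^z + U_n^v$, where $U_n^z$ is active only on the two reference agents of edge $k$. Because the target velocity $v^*$ and the target displacement $z_k^*$ are constant, I would work in the shifted momentum coordinate $\tilde{p}_n := p_n - m_n v^*$, for which $\dot{\tilde{p}}_n = \dot{p}_n$ and $v_n - v^* = \tilde{p}_n/m_n$.

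As a Lyapunov candidate I would take
\begin{equation*}
V := \sum_{n=1}^N \frac{1}{2m_n}\tilde{p}_n^T \tilde{p}_n \,+\, V_{\mathrm{ang}} \,+\, V_{\mathrm{est}} \,+\, H_k^z,
\end{equation*}
where $V_{\mathrm{ang}}$ and $V_{\mathrm{est}}$ denote, respectively, the cosine-spring and distance-estimator-error terms appearing in $H^L$ in (\ref{gH}), while $H_k^z$ is the scale/orientation potential in (\ref{Hs}). Each block is positive definite in its argument, hence $V$ is a valid candidate.

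Differentiating along closed-loop trajectories, the angle potentials, estimator errors and momenta combine exactly as in the proof of Theorem 2 by the very construction of the estimators (\ref{er1k})--(\ref{er2j}), producing the pure dissipation on the cosine rates shown in (\ref{dgH}). The scale/orientation potential differentiates to $(z_k - z_k^*)^T\dot{z}_k$, which is absorbed by the spring part of the virtual coupling in (\ref{cs}) (using $\dot{z}_k = \dot{q}_1 - \dot{q}_2$), leaving the damper contribution $-\dot{z}_k^T D_z \dot{z}_k$. Finally, the shifted kinetic part combined with the velocity-tracking input (\ref{uv}) yields $-\tilde{v}_n^T(D_n^r + D_n^v)\tilde{v}_n$, since the feedforward $D_n^r v^*$ (up to the sign convention noted in Remark 4) exactly compensates the open-loop friction at the target velocity. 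Summing these contributions gives $\dot{V}\le 0$ with dissipation in $\tilde{v}_n$, $\dot{z}_k$, and in the cosine rates.

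On the largest invariant set inside $\{\dot{V}=0\}$ I would read off $\tilde{v}_n = 0$ for every $n$ (hence a common velocity $v^*$, so every relative position and bearing is constant), $\dot{z}_k = 0$, and $\dot{\widetilde{\cos{\theta_l}}} = \dot{\widetilde{\cos{\phi_l}}} = 0$ for every $l$. From $\dot{\tilde{p}}_n = 0$ together with $v_n = v^*$, the total input at every agent must vanish, giving simultaneously the linear system of Theorem 2 for the angle errors and, on the reference edge, the algebraic identity $z_k = z_k^*$; \textit{Lemma \ref{lemma3}} then forces $\widetilde{\cos{\theta_l}} = \widetilde{\cos{\phi_l}} = 0$ for all $l$, and Remark 5 propagates $z_k = z_k^*$ to the full scale and orientation of the network. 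The main obstacle I anticipate is precisely this last decoupling step: because the scale/orientation force and the angle-stabilization force both act on the two reference agents, one must rule out that they mutually cancel a nonzero error. This is resolved by observing that $U_n^z$ acts along $z_k - z_k^*$, while the angle Jacobian columns of \textit{Lemma \ref{lemma3}} are built from the orthogonal projections $I_2 - s_k s_k^T$, $I_2 - s_j s_j^T$, $I_2 - s_i s_i^T$, so in the neighborhood of the desired formation the two contributions live in generically transverse subspaces and must vanish separately, which gives the desired conclusion.
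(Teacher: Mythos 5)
First, note that the paper itself does not actually prove this theorem: it asserts that ``the dynamics of angle-based formation stabilization and maneuvers are not coupled'' and leaves the details to the reader. Your proposal is therefore structurally more substantive than the paper's argument, and its skeleton is the right one: a composite storage function in the shifted momenta $\tilde{p}_n=p_n-m_nv^*$, the observation that translation invariance of the cosines (the angle Jacobians of each triangle sum to zero) lets the angle/estimator blocks reproduce the dissipation of (\ref{dgH}), the extra terms $-\dot z_k^TD_z\dot z_k$ and $-\tilde v_n^T(D_n^r+D_n^v)\tilde v_n$, and LaSalle. (Two side remarks: as printed, (\ref{cs}) has the sign opposite to (\ref{csk}) and would inject rather than dissipate energy along edge $k$; you tacitly use the intended sign. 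Also, on the invariant set $\dot q_n=v^*\neq 0$, so the estimator errors $\bar r$ need not be stationary, since their dynamics (\ref{er1k})--(\ref{er2j}) are driven by absolute, not relative, velocities; this is a complication absent from Theorem 2 that your invariance argument should at least acknowledge.)

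The genuine gap is in your final decoupling step, and the ``transverse subspaces'' argument does not repair it. At a reference agent, say agent $1$, the angle-stabilization force is $-(\hat L_{\theta 1}^T\gamma_{\theta 1}+\hat L_{\phi 1}^T\gamma_{\phi 1})$, i.e.\ a linear combination of the two generically independent directions $s_k^{\perp},s_j^{\perp}$ (resp.\ $s_i^{\perp}$) weighted by the two scalar angle errors. As those errors range over $\mathbb{R}$, this force sweeps out all of $\mathbb{R}^2$; it is not confined to any proper subspace, so there is nothing for the direction $z_k-z_k^*$ to be transverse to, and the single vector equation $U_1=0$ admits nonzero solutions in which the two contributions cancel exactly. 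The correct resolution is to exploit the equilibrium equations of the agents that do \emph{not} carry the scale/orientation force: in the single-triangle case, $0=U_3$ reads $c_{\theta 3}\hat L_{\theta 3}^T\widetilde{(s_k^Ts_j)}+c_{\phi 3}\hat L_{\phi 3}^T\widetilde{(s_i^Ts_j)}=0$, and the $2\times 2$ block $\bigl(\hat L_{\theta 3}^T\ \ \hat L_{\phi 3}^T\bigr)$ is invertible whenever $s_i,s_j$ are independent, which forces both angle errors to zero \emph{before} one returns to agents $1,2$ to conclude $z_k=z_k^*$. For a general triangulated Laman graph this requires that the stacked Jacobian of Lemma \ref{lemma3} restricted to the rows of the $N-2$ non-reference agents (a $2(N-2)\times 2M=2M\times 2M$ matrix) remain of full rank --- a strictly stronger statement than Lemma \ref{lemma3}, which neither you nor the paper establishes. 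Until that rank condition is proved, the claim that angle errors and the displacement error vanish separately on the invariant set is unsupported.
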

 
Since the dynamics of angle-based formation stabilization and maneuvers are not coupled, the proof of scale and orientation control, and trajectory tracking is straightforward. We omit the details here and leave them to the readers. 

\section{Simulations}
Consider a group of four agents modeled as in (\ref{model}). The network diagram of the formation with heterogeneous constraints is shown in Fig. 3. The initial positions and control objectives are given in Table \ref{tb:conditions} and the related parameters are given in Table \ref{tb:parameters}.

\begin{figure} [!ht] 
\begin{center}
\label{Lamangraph}
\includegraphics[width=4.0cm]{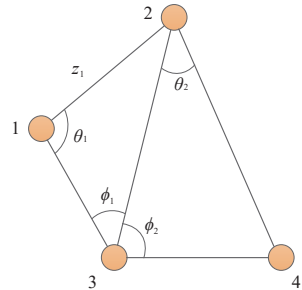}   
\caption{Formation diagram with 2 triangles} 
\end{center}
\end{figure}

\begin{table} [!ht]
\begin{center}
\caption{Initial positions and desired angles}\label{tb:conditions}
\begin{tabular}{cccc}
Parameter & Value & Parameter & Value \\\hline
$q_1(0)$ & (0.9,2.2) &$q_2(0)$ & (2.2,2.8)   \\
$q_3(0)$ & (1.8,1.1) & $q_4(0)$ & (3.1,1.9)  \\
$v_{1,2,3,4}(0)$ & (0,0) & $v^*_{1,2,3,4}$ & (3,3)  \\
$\theta_1^*$ & $\pi/2$ & $\phi_2^*$ & $\pi/4$  \\
$\theta_2^*$ & $\pi/4$ & $\phi_2^*$ & $\pi/4$ \\
$s_1^*$ & $(10,-10)$ \\\hline
\end{tabular}
\end{center}
\end{table}

\begin{table} [!ht]
\begin{center}
\caption{Model and controller parameters}\label{tb:parameters}
\begin{tabular}{cccc}
Parameter & Value \\\hline
$m_n, n\in \{1,2,3,4\}$ & 1   \\
$ c_{\theta_l n}, l\in \{1,2\}, n\in \{1,2,3,4\}$ & 10  \\
$ c_{\phi_l n}, l\in \{1,2\}, n\in \{1,2,3,4\}$ & 10  \\
$ d_{\theta_l n}, l\in \{1,2\}, n\in \{1,2,3,4\}$ & 1  \\
$ d_{\phi_l n},l\in \{1,2\}, n\in \{1,2,3,4\}$ & 1  \\
$ c_\epsilon, \epsilon \in \mathcal{E}$ & 10  \\
$ D_z$ & ${\rm{diag}}$(0.8,0.8)  \\
$ D_n^v, n\in \{1,2,3,4\}$ & ${\rm{diag}}$(1.8,1.8)
\\\hline
\end{tabular}
\end{center}
\end{table}

By implementing the controller given by (\ref{um}), (\ref{cs}), and (\ref{uv}), with the corresponding estimators, we obtain the results depicted in Figs. 4--6. In particular, we observe that the errors converge to zero in Figs. 4 and 5, illustrating the result of Theorem \ref{thm3}. The trajectory evolution of the four agents is shown in Fig. 6, where the initial positions and terminal positions (TP) are enlarged.

\begin{figure}[!ht]    
    \centering
    \subfigure[Error evolution of $\theta_1$]{\includegraphics[width=4cm]{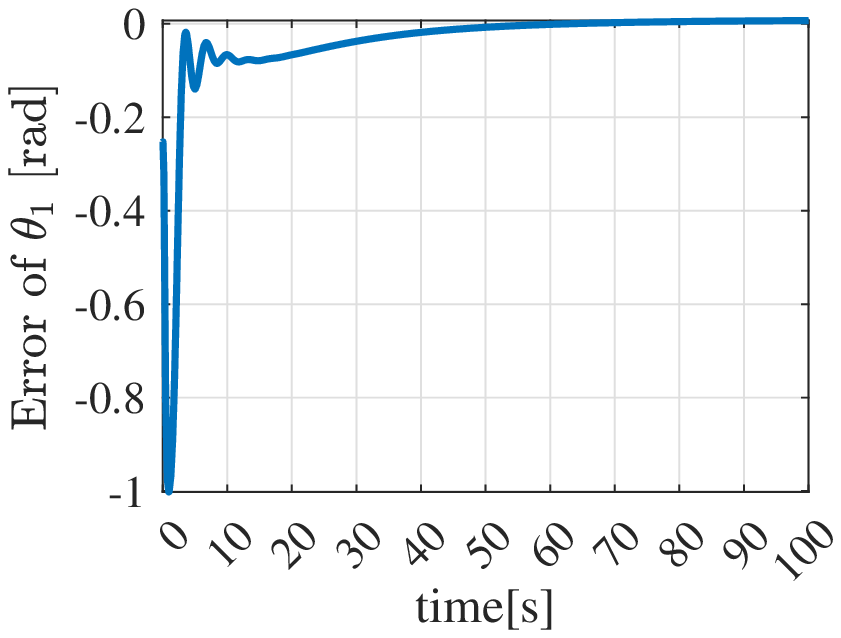}}
    \subfigure[Error evolution of $\phi_1$]{\includegraphics[width=4cm]{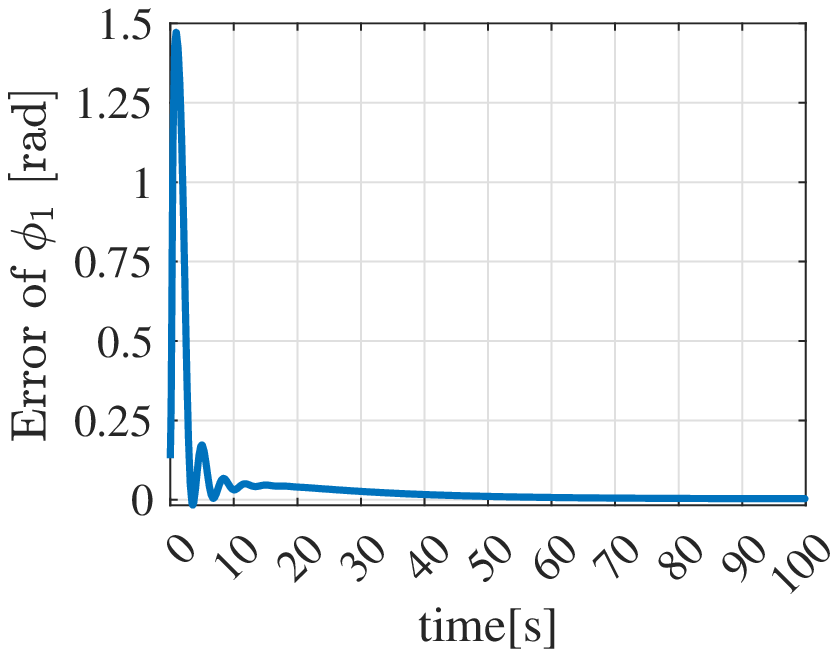}}
    \quad 
    \subfigure[Error evolution of $\theta_2$]{\includegraphics[width=4cm]{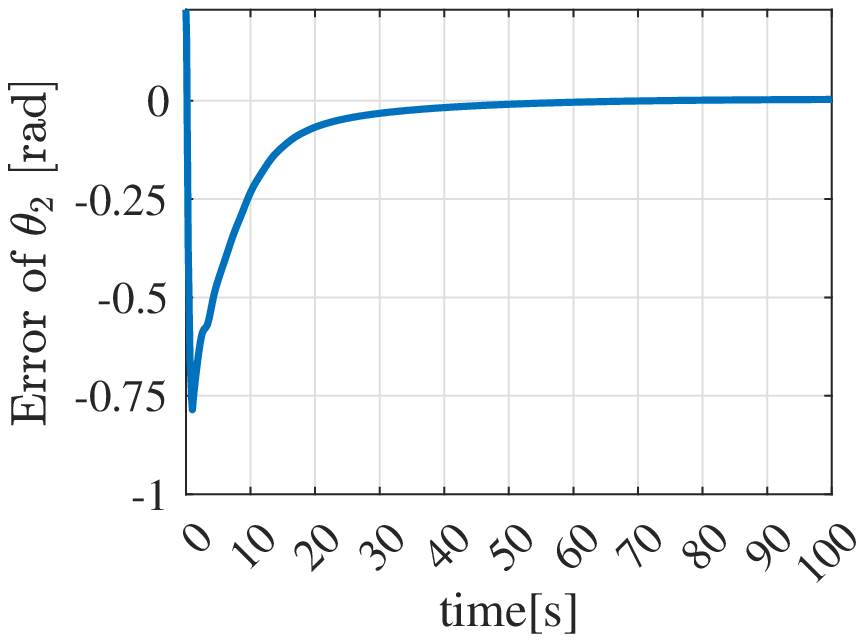}}
    \subfigure[Error evolution of $\phi_2$]{\includegraphics[width=4cm]{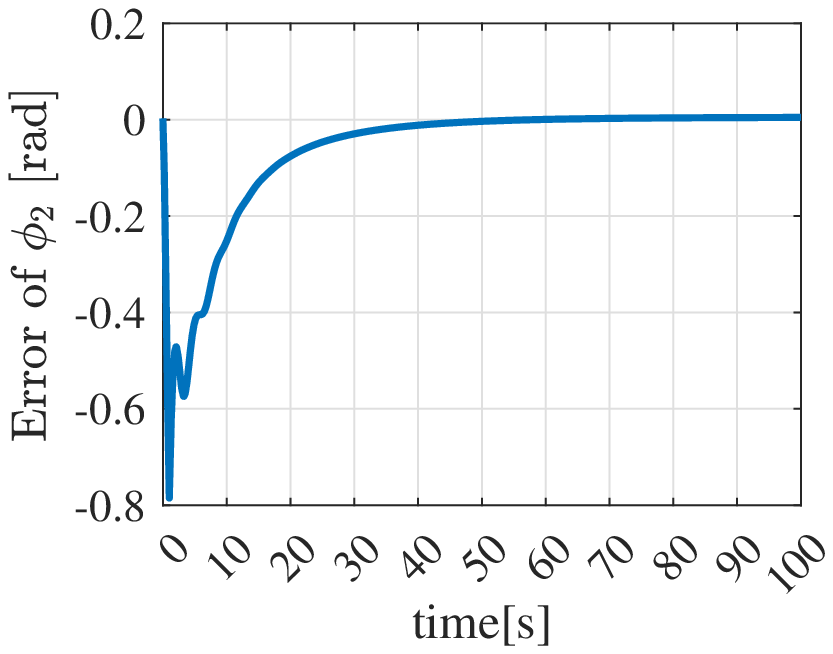}}
    \quad 
    \caption{Error evolution of the formation stabilization}
\end{figure}

\begin{figure}[!ht]    
    \centering
    \subfigure[Error evolution of the displacement]{\includegraphics[width=4cm]{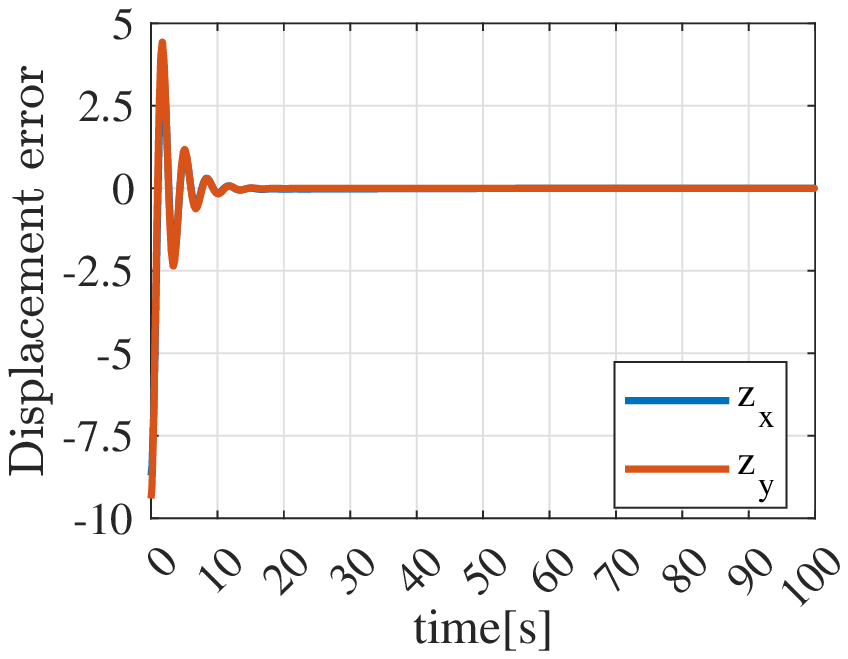}}
    \subfigure[Error evolution of velocities tracking]{\includegraphics[width=4cm]{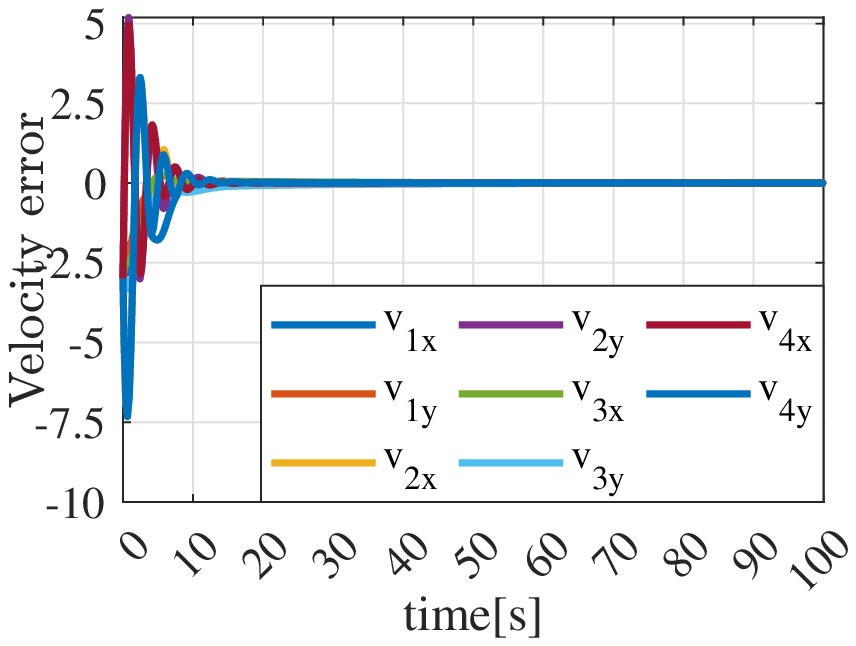}}
    \quad 
    \caption{Error evolution of maneuvers}
\end{figure}

\begin{figure}[!ht]    
    \centering
    \includegraphics[width=6.0cm]{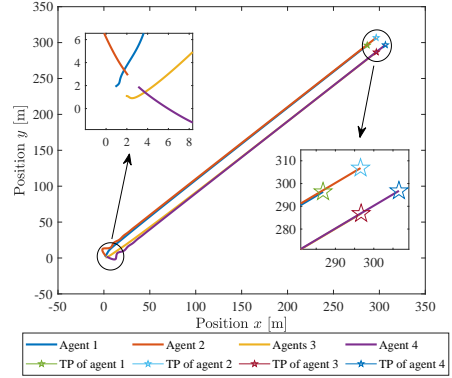} 
    \caption{Formation trajectory}
\end{figure}

\section{Conclusions and future research}
In this paper, we have developed a port-Hamiltonian framework for a network of agents modeled as double integrators to achieve a formation with an underlying triangulated Laman graph constrained by the desired angles. The formation controller is derived by assigning the spring and damping couplings along the angles to be controlled and the mappings between the angles and agent actuators. An estimator is proposed by pH theory and the fact that energy is coordinate-free for the unavailable distances. In addition, several formation maneuvers are investigated, and damping injection is also achieved using velocity measurement.  

Regarding future research, our framework can be extended to multi-agent formations with heterogeneous dynamics and formation constraints. On the one hand, the pH modeling allows for complex and heterogeneous agent dynamics and the Dirac structure behind it enables the scalability of the network. On the other hand, the principle of the controller is to assign the virtual couplings on the error of formation constraints and then to map these couplings to the actuators by the measurement Jacobians, which provides a general framework for other formation constraints.

\bibliographystyle{unsrt}
\bibliography{ifacconf}             

\end{document}